\newtheorem{theorem}{Theorem}
\newtheorem{lemma}[theorem]{Lemma}
\newtheorem{protocol}{Protocol}
\newtheorem{remark}{Remark}
\newcommand{\norm}[1]{\ensuremath{\left\lVert #1 \right\rVert}}
\newcommand{\tr}{\operatorname{Tr}}
\newcommand{\eps}{\varepsilon}
\newcommand{\calK}{\mathcal{K}}
\newcommand{\calA}{\mathcal{A}}
\newcommand{\calG}{\mathcal{G}}
\newcommand{\calJ}{\mathcal{J}}
\newcommand{\calL}{\mathcal{L}}
\newcommand{\calS}{\mathcal{S}}
\newcommand{\calX}{\mathcal{X}}
\newcommand{\calY}{\mathcal{Y}}
\newcommand{\calZ}{\mathcal{Z}}
\newcommand{\calH}{\mathcal{H}}
\newcommand{\calR}{\mathcal{R}}
\newcommand{\R}{\mathbb{R}}
\newcommand{\Id}{\mathbbm{1}}
\newcommand{\Herm}{\mathrm{Herm}}
\newcommand{\Pos}{\mathrm{Pos}}
\newcommand{\xot}{\mathrm{XOT}}
\newcommand{\ms}{\mathrm{MS}}
\newcommand{\acc}{\mathrm{acc}}
\newcommand{\rej}{\mathrm{rej}}
\newcommand{\test}{\mathrm{test}}
\newcommand{\prot}{\mathrm{prot}}
\newcommand{\jk}{\mathrm{junk}}
\newcommand{\rL}{\mathrm{L}}
\newcommand{\proj}{\Pi}
\newcommand{\povm}{E}
\newcommand{\meas}{M}
\newcommand{\ketbra}[2]{\ket{#1}\!\!\bra{#2}}
\newcommand{\kb}[1]{\ketbra{#1}{#1}}
\newcommand{\PB}{\mathrm{P}^{\xot}_{B}} 
\newcommand{\PA}{\mathrm{P}^{\xot}_{A}} 
\newcommand{\X}{X}
\newcommand{\Y}{Y}
\newcommand{\A}{A}
\newcommand{\B}{B}
\newcommand{\ops}{O}
\begin{document}

\title{A device-independent protocol for XOR oblivious transfer}

\author{Srijita Kundu}
\affiliation{Institute for Quantum Computing, University of Waterloo, Waterloo, Ontario, Canada}
\email{srijita.kundu@uwaterloo.ca}
\orcid{0000-0002-8630-0113}

\author{Jamie Sikora}
\affiliation{Perimeter Institute for Theoretical Physics, Waterloo, Ontario, Canada and Virginia Polytechnic Institute and State University, Blacksburg, Virginia, USA}
\email{sikora@vt.edu}

\author{Ernest Y.-Z. Tan}
\affiliation{Institute for Quantum Computing, University of Waterloo, Waterloo, Ontario, Canada}
\email{yzetan@uwaterloo.ca}
\orcid{0000-0003-4872-158X}

\maketitle 

\begin{abstract} 
Oblivious transfer is a cryptographic primitive where Alice has two bits and Bob wishes to learn some function of them. 
Ideally, Alice should not learn Bob's desired function choice and Bob should not learn any more than what is logically implied by the function value. 
While decent quantum protocols for this task are known, many become completely insecure if an adversary were to control the quantum devices used in the implementation of the protocol. 
In this work we give a fully device-independent quantum protocol for XOR oblivious transfer.
\end{abstract} 

\section{Introduction}  
Oblivious transfer is an important cryptographic primitive in two-party computation as it can be used as a universal building block for constructing more elaborate protocols~\cite{Kil88}. 
Indeed, some quantum protocols for this task are known~\cite{WST08, Schaff07, CKS13, CGS16}. 
However, it can be shown that there do not exist classical protocols with any level of information-theoretic security, and there do not exist quantum protocols with perfect security~\cite{CKS13, Low97}. 

In this paper, we consider a variant of oblivious transfer called \emph{XOR oblivious transfer (XOT)}. 
This is the two-party cryptographic primitive in which two spatially separated parties, Alice and Bob, wish to do the following task: Alice outputs two bits $(x_0, x_1)$, which are uniformly random, and Bob outputs a trit $b$ which is uniformly random, as well as 
$x_b$ where we define $x_2 = x_0 \oplus x_1$. 
In other words, Alice and Bob communicate and Bob learns one bit of information from Alice's two bits (either the first bit, second bit, or their XOR).  
When designing XOT protocols, the security goals are: 
\begin{enumerate}
  \item \emph{Completeness:} If both parties are honest, then their outcomes are consistent (i.e., $x_b$ is the correct value), uniformly random, and neither party aborts. 
  \item \emph{Soundness against cheating Bob:} If Alice is honest, then a dishonest (i.e., cheating) Bob cannot learn both $x_0$ and $x_1$ by digressing from protocol. 
  \item \emph{Soundness against cheating Alice:} If Bob is honest, then a dishonest (i.e., cheating) Alice cannot learn $b$ by digressing from protocol.   
\end{enumerate}

\begin{remark} 
One could imagine a situation where Alice already has a fixed choice of $(x_0, x_1)$ that she wishes to input into a XOT protocol (perhaps from the result of an earlier computation). 
However, we can use the outcomes of an XOT protocol as described above as a one-time pad to convey the information to Bob. 
For more details, see~\cite{CKS13}. 
\end{remark} 

\begin{figure}[!h]
\centering
\begin{tikzpicture}
\node at (0.4,-0.4) {\includegraphics[scale=0.45]{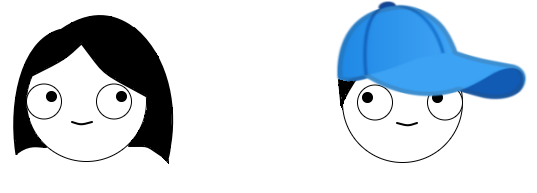}};
\draw[rounded corners] (-2.5, -1.4) rectangle (2.5, -2.8) node[pos=0.5] {XOR Oblivious Transfer};
\draw[-Stealth] (-2, -2.8) -- (-2, -3.6);
\node at (-1.5,-3.8) {$(x_0,x_1, {\color{blue} x_2 = x_0\oplus x_1})$};
\draw[-Stealth] (2,-2.8) -- (2,-3.6);
\node at (1.8, -3.8) {$(b, x_b)$};
\end{tikzpicture}
\caption{Desired outputs for XOR oblivious transfer (XOT).}
\end{figure}

In this paper we are concerned with \emph{information-theoretic security}, meaning that Alice and Bob are only bounded by the laws of quantum mechanics. 
In other words, Alice and Bob can perform arbitrarily complicated computations, have arbitrarily large quantum memories, and so on. 
We shall have occasion to change how much control Alice and Bob have over the protocol, but precisely what actions are allowed to be performed by dishonest parties should be clear from context, and will be described shortly. 

We focus on studying XOT protocols from the perspective of assuming perfect completeness and trying to make them as sound as possible.\footnote{To contrast, the task of finding protocols with perfect soundness and the best possible completeness was considered in \cite{SCK14}.}
To this end, we choose to quantify the soundness via \emph{cheating probabilities}, which we define as follows:
\begin{center} 
\begin{tabularx}{\textwidth}{rX}
$\PB$: & The maximum probability with which a dishonest Bob can learn both of honest Alice's outcome bits $(x_0, x_1)$ and the protocol does not abort. \\ 
$\PA$: & The maximum probability with which a dishonest Alice can learn honest Bob's choice outcome $b$ and the protocol does not abort. \\ 
\end{tabularx} 
\end{center} 
Any XOT protocol with perfect completeness necessarily has $\PA \geq \frac{1}{3}$ and $\PB \geq \frac{1}{2}$,
since a dishonest Alice could always guess one of three choices for Bob's outcome $b$ uniformly at random and, similarly, dishonest Bob can follow the honest protocol to gain perfect knowledge of $x_0$, $x_1$, or $x_0 \oplus x_1$, and then randomly guess the unknown bit in Alice's outcome $(x_0, x_1)$.

\begin{remark}
In this work, we chose to quantify soundness via cheating probabilities, but we note that such a measure of security is not necessarily \emph{composable}~\cite{WW08,VPdR19}.  
Unfortunately, it can be very challenging to prove that a protocol is composably secure, and in some settings such protocols are in fact impossible~\cite{VPdR19}. 
As a first analysis of the protocol proposed in this work, we will restrict ourselves to studying the cheating probabilities only.
\end{remark}
   
Since the lowest possible bounds on $\PA$ and $\PB$ for perfectly complete XOT protocols are asymmetric, we shall consider them in pairs and will not concern ourselves with finding an ``optimal protocol''. 
Instead, we motivate our work by asking the following question: 
\begin{flushleft}
\emph{``Is is possible to create quantum protocols where both $\PA, \PB < 1$ when Alice and Bob do not even trust their own quantum devices?''} 
\end{flushleft}  

When taken literally, this question cannot be answered in the affirmative, since arbitrarily malicious devices could simply broadcast all desired information to a dishonest party. 
However, it turns out that there exist quantum protocols that can be proven secure using almost no assumptions other than ruling out this extreme scenario (which seems a rather necessary assumption in any case). 
This is the notion of \emph{device-independent security}, which typically exploits \emph{nonlocal games} played using entangled states. In a fully device-independent model, one only assumes that the parties' devices do not directly broadcast certain information to the dishonest party and/or each other (we shall explain this in more detail in Section~\ref{sect:DI}). 
In particular, one does not assume that the states and/or measurements implemented by the devices are known, and even the dimensions of the quantum systems are not specified.
Device-independent security analyses exist for other cryptographic tasks such as quantum key distribution~\cite{PAB+09,ADF+18}, bit commitment~\cite{SCA+11, AMPS16}, coin-flipping~\cite{ACK+14}, and weak string erasure (in the bounded or noisy storage models)~\cite{KW16}. 
The approach we use in this work is somewhat similar to~\cite{ACK+14}.

Coin flipping, bit commitment and weak string erasure are all two-party cryptographic tasks like oblivious transfer. There are classical constructions of bit commitment and coin-flipping from oblivious transfer. Since these constructions do not use trusted quantum states or measurements to perform the conversion, we think it would be possible to obtain device-independent bit commitment and coin-flipping from device-independent oblivious transfer, although the resulting cheating probabilities may not be optimal. In the reverse direction, there is a quantum construction of oblivious transfer from bit commitment as well \cite{Unr10}. However, this construction was not designed to be device-independent (it relies on sending trusted states), hence a device-independent protocol for oblivious transfer is not implied by \cite{SCA+11, AMPS16}. Weak string erasure is a task very similar to oblivious transfer, but the protocol for weak string erasure given in \cite{KW16} assumes noisy or bounded quantum storage in addition to the standard minimal assumptions in the device-independent framework, and hence does not directly yield device-independent oblivious transfer without those assumptions.

In this work, we examine the security of quantum XOT protocols in semi-device-independent and device-independent scenarios. A semi-device-independent scenario is weaker than the fully device-independent scenario described above; specifically, the honest parties may trust the state preparation
devices, or the measurement devices, or quantum operations, or some combination of these. By a slight abuse of notation, we use the same notation $\PA$ and $\PB$ to denote the cheating probabilities of Alice and Bob in all the different scenarios, corresponding to differently defined cheating capabilities of the dishonest party. 
For example, if we were to allow a dishonest Alice to control Bob's measurements, it may lead to a different value of $\PA$. 
The cheating capabilities of cheating parties should be clear from context when we discuss $\PA$ and $\PB$.   

\subsection{Trivial protocols}  
      
For readers new to oblivious transfer, we present two bad classical protocols and a simple quantum protocol. 

\begin{protocol}[Bad XOT Protocol 1] 
\label{prot:Bad1} 
\hfill
\begin{enumerate}
\item Alice chooses $(x_0, x_1)$ uniformly at random and sends $(x_0, x_1)$ to Bob. 
\item Bob chooses $b$ uniformly at random. 
\item Alice outputs $(x_0, x_1)$ and Bob outputs $(b, x_b)$.
\end{enumerate} 
\end{protocol}     

A moment's thought shows that Bob has full information (he clearly learns $(x_0, x_1)$) while Alice has no information.  
Therefore, we have 
\begin{equation} 
\PA = 1/3 
\quad \text{ and } \quad 
\PB = 1
\end{equation} 
which is as insecure concerning cheating Bob as possible. 

\begin{protocol}[Bad XOT Protocol 2] 
\label{prot:Bad2} 
\hfill
\begin{enumerate}
\item Bob chooses $b$ uniformly at random and sends $b$ to Alice.   
\item Alice chooses $(x_0, x_1)$ uniformly at random and sends $x_b$ to Bob. 
\item Alice outputs $(x_0, x_1)$ and Bob outputs $(b, x_b)$. 
\end{enumerate} 
\end{protocol}   
Here Alice has full information while Bob has none. 
Therefore, we have 
\begin{equation} 
\PA = 1 
\quad \text{ and } \quad 
\PB = 1/2. 
\end{equation} 

\begin{remark} 
These protocols are obviously device-independent since they are classical. But surprisingly, these protocols can be useful --- we shall use them to construct our fully device-independent protocol. 
The idea is this: suppose Alice wishes to test Bob to see if he has been cheating, and aborts if and only if the test fails. 
Then if the test passes, the parties need a way to finish executing the protocol, which they could do using Protocol~\ref{prot:Bad1} (which is independent of previous steps in the tested protocol). 
\end{remark} 
  
\subsection{A quantum protocol for XOT with no device-independent security}
  
Here we present the oblivious transfer protocol from~\cite{CKS13} adapted to the XOT setting. 
For $b \in \{ 0, 1, 2 \}$, let $\ket{\psi^{\pm}_b} \in \calX \calY$ denote the following two-qutrit state: 
\begin{equation} \label{qutrits}
\ket{\psi^{\pm}_b} 
= 
\begin{cases} 
\quad \frac{1}{\sqrt{2}}(\ket{00}_{\calX\calY} \pm \ket{22}_{\calX\calY}) & \text{ if } b = 0, \\ 
\quad \frac{1}{\sqrt{2}}(\ket{11}_{\calX\calY} \pm \ket{22}_{\calX\calY}) & \text{ if } b = 1, \\ 
\quad \frac{1}{\sqrt{2}}(\ket{00}_{\calX\calY} \pm \ket{11}_{\calX\calY}) & \text{ if } b = 2. 
\end{cases} 
\end{equation}   
Note that for every $b \in \{ 0, 1, 2 \}$, we have that 
$\ket{\psi^+_b}$ and $\ket{\psi^-_b}$ are orthogonal.
We are now ready to state the protocol.   
   
\begin{protocol}
\label{prot:non-di-xot}
\hfill
\begin{enumerate}
\item Bob chooses $b \in \{0,1,2\}$ uniformly at random, prepares the state $\ket{\psi^+_b}$ in registers $\calX\calY$, and sends the register $\calX$ to Alice.
\item Alice chooses $(x_0,x_1)$ uniformly at random, performs the unitary
\begin{equation} 
U_{(x_0,x_1)} = (-1)^{x_0}\kb{0} + (-1)^{x_1}\kb{1} + \kb{2} 
\end{equation} 
on $\calX$, and then sends it back to Bob.
\item Bob performs the $2$-outcome measurement 
$\{ \kb{\psi_b^+}, \Id - \kb{\psi_b^+}\} $
and records his outcome as $c = 0$ if he gets $\kb{\psi_b^+}$ and $c=1$ otherwise. 
\item Alice outputs $(x_0,x_1)$ and Bob outputs $(b, c)$. 
\end{enumerate}
\end{protocol} 
Protocol~\ref{prot:non-di-xot} can be checked to be complete (i.e., Bob always gets the correct outcome). 
The cheating probabilities in this protocol in the cases of trusted and untrusted devices are given by Theorem~\ref{th:Oldprotocol} below. We give a proof for the trusted case in Section~\ref{sect:proofsDD}, and the relatively simple proof for the untrusted case is given below.

\begin{theorem} 
\label{th:Oldprotocol}
In Protocol~\ref{prot:non-di-xot}, the cheating probabilities are as listed in the following table. 
(In the untrusted setting, Alice controls Bob's state preparation and Bob controls Alice's unitary.) 
\begin{table}[!h]
\centering
\begin{tabular}{|c|c|c|}
\hline
\rule{0pt}{2.5ex} & $\PA$ & $\PB$ \\[0.5ex]
\hline
\rule{0pt}{2.5ex} Trusted devices 
& 1/2
& 3/4  
\\[0.5ex]
\hline
\rule{0pt}{2.5ex} Untrusted devices 
& 1 
& 1 
\\[0.5ex]
\hline
\end{tabular} 
\end{table} 
\end{theorem}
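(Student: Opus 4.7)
The plan is to handle the four table entries in two groups. The two untrusted bounds are trivially $\PA, \PB \leq 1$, so it suffices to exhibit matching attacks; the two trusted bounds require genuine quantum analysis, which I would defer to the later section as the excerpt indicates.

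For untrusted $\PA = 1$: since a dishonest Alice controls Bob's state preparation, I would have her supply a device which, on Bob's input $b$, correctly prepares $\ket{\psi_b^+}$ in the registers $\calX\calY$ and simultaneously leaks $b$ to Alice through a side channel in her own lab. Honest Bob's marginal interaction is identical to the prescribed protocol, so completeness is preserved and Alice recovers $b$ with certainty. Symmetrically, for untrusted $\PB = 1$: since a dishonest Bob controls Alice's unitary, I would install a device which, on input $(x_0,x_1)$, both applies $U_{(x_0,x_1)}$ correctly to register $\calX$ and transmits $(x_0,x_1)$ back to Bob. Neither attack disturbs the honest party's view, so the protocol completes and the cheater wins with probability $1$.

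For the trusted case, $\PA = 1/2$ reduces to minimum-error discrimination of the three equally-likely reduced states $\rho_b = \tr_{\calY}\kb{\psi_b^+}$ on $\calX$. Each $\rho_b$ is an equal mixture of two of the three computational basis states of $\calX$, so measuring $\calX$ in the computational basis narrows $b$ down to two candidates, and a uniform guess then succeeds with probability $1/2$; I would obtain the matching upper bound via an SDP dual argument or by direct use of the symmetry among the three states. For $\PB = 3/4$, one optimizes over Bob's choice of initial (possibly entangled) state on $\calX\calY$ together with any ancilla, and his final joint measurement; Alice's step produces four candidate states indexed by $(x_0,x_1)$ under the sign-flip action of $U_{(x_0,x_1)}$, and $\PB$ equals the optimal success probability for equiprobable discrimination of these four states, maximized over Bob's input.

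The main obstacle is the trusted $\PB$ upper bound, since the optimization ranges over arbitrary Bob strategies with unbounded ancilla, so one needs a principled reformulation, for example by noting that the four post-Alice states differ only by a diagonal unitary acting on the qutrit subspace and then invoking an SDP or symmetry argument to pin down $3/4$. The untrusted attacks above are essentially by inspection once one formalizes what ``controls the device'' means, and these are the proofs the excerpt intends to appear immediately after the theorem statement.
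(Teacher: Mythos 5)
Your trusted-$\PA = 1/2$ argument matches the paper exactly: the state-discrimination problem on the ensemble $\{(1/3,\rho_b)\}$, solved as an SDP. For trusted $\PB = 3/4$, the paper does not re-derive the bound at all — it observes that Alice's actions and Bob's cheating goal in Protocol~\ref{prot:non-di-xot} are unchanged from the protocol of~\cite{CKS13} and simply imports the $3/4$ bound from there; you identified the right reformulation (four post-Alice states related by sign flips, optimized over Bob's input and ancilla) but you also flagged it as the main open obstacle, whereas the paper sidesteps it by citation.

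The genuine gap is in the untrusted row. You have the malicious device both \emph{execute the honest operation} and \emph{leak the secret through a side channel} (``simultaneously leaks $b$ to Alice,'' ``transmits $(x_0,x_1)$ back to Bob''). This proves nothing specific to Protocol~\ref{prot:non-di-xot}: with an out-of-band broadcast channel from the device to the cheater, \emph{every} protocol has $\PA = \PB = 1$, which is exactly the degenerate case the paper explicitly rules out of its device-independence model (``arbitrarily malicious devices could simply broadcast all desired information''). The paper's attacks do not broadcast anything; they only change the quantum operation the device performs, and the secret is recovered entirely from the legitimate register $\calX$ passed between the parties. Concretely: a cheating Bob reprograms Alice's unitary $U_{(x_0,x_1)}$ to apply the Pauli operator indexed by $(x_0,x_1)$ on the qubit subspace; having prepared $\ket{\psi_2^+}$ (a Bell state), Bob's returned register lies in the Bell basis and a Bell measurement reveals $(x_0,x_1)$ perfectly (superdense coding). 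A cheating Alice reprograms Bob's state preparation to output $\ket{b}\ket{0}$ on input $b$; Bob sends register $\calX = \ket{b}$ to Alice as the protocol instructs, and Alice measures it in the computational basis. These attacks are what make the theorem non-vacuous and motivate the paper's subsequent device-independent design; a side-channel attack would not.
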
 

\paragraph{Untrusted device cheating strategies.}    
As indicated in the second row of the above table, the protocol is not secure when the devices are untrusted. To see this,
assume Bob has full control over Alice's unitary. 
It could be a unitary which implements a superdense coding protocol: 
\begin{equation} 
U^{cheat}_{(x_0,x_1)} = 
\begin{cases} 
\quad \Id_{\calX} & \text{ if } (x_0, x_1) = (0, 0), \\ 
\quad \sigma_X & \text{ if } (x_0, x_1) = (0, 1), \\ 
\quad \sigma_Y & \text{ if } (x_0, x_1) = (1, 0), \\ 
\quad \sigma_Z & \text{ if } (x_0, x_1) = (1, 1),  \\ 
\end{cases}  
\end{equation} 
where $\sigma_X$, $\sigma_Y$, and $\sigma_Z$ are the Pauli operators.
(Note that this unitary acts on a qubit, which we can assume Bob sends if he wishes, or just define the unitary to act trivially on the $\ket{2}$ subspace.) 
Now, if Bob creates $\ket{\psi^+_2}$ at the beginning of the protocol, he is left with a state from the Bell basis at the end of the protocol. 
In other words, he can perfectly learn $(x_0, x_1)$. 

In the case of cheating Alice, she can simply control Bob's state preparation device to prepare the state $\ket{b}\ket{0} \in \calX \calY$ on input $b$, and have Bob send the register $\calX$. 
From this state, Alice simply measures it to learn $b$. 
Thus, we have 
\begin{equation} 
\PA = 1 
\quad \text{ and } \quad 
\PB = 1 
\quad 
\text{[Devices are NOT trusted].}  
\end{equation}    
 
Thus, we need a clever way to design protocols where Alice and Bob cannot cheat 
in the above fashion.  
This motivates the need for device-independent XOT protocols and some protocol design ideas that should be avoided. 

\subsection{A semi-device-independent XOT protocol from magic square}
\label{intro:ms}
  
Similar to device-independent protocols which exist for other cryptographic tasks, we design our protocols using nonlocal games. 
In this work, we shall make use of the nonlocal game known as the Mermin-Peres magic square game. 
In the magic square game, 
\begin{itemize}
\item Alice and Bob receive respective inputs $a \in \{0,1,2\}$ and $b \in \{0,1,2\}$ independently and uniformly at random.
\item Alice outputs three bits $(x_0,x_1,x_2) \in \{0,1\}^3$ such that $x_0\oplus x_1\oplus x_2 = 0$ and Bob outputs three bits $(y_0,y_1,y_2) \in \{0,1\}^3$ such that $y_0\oplus y_1\oplus y_2 = 1$.
\item Alice and Bob win the game if $x_b = y_a$.
\end{itemize}

\begin{remark}
We have defined the magic square game such that Alice and Bob are \emph{required} to produce bits satisfying their respective parity conditions, rather than defining these parity conditions as part of the win condition. This means that we can equivalently view this as Alice and Bob each producing only the bits $(x_0,x_1)$ and $(y_0,y_1)$ respectively, with the third bit being completely specified by the parity conditions.
\end{remark}

If Alice and Bob are allowed only classical strategies (using e.g.~shared randomness), the magic square game cannot be won with probability greater than ${8}/{9}$. 
However, there exists a quantum strategy, where Alice and Bob share prior entanglement, which wins the magic square game with probability $1$.  
We shall refer to this strategy as the \emph{magic square strategy} which is detailed in Section~\ref{sect:ms-intro}.  

Now, suppose Alice and Bob play the magic square game according to the above description. 
Notice that $x_2$ will always be equal to $x_0 \oplus x_1$, similar to the definition of XOT, and that $(x_0, x_1)$ is uniformly distributed (see Section~\ref{sect:ms-intro}).  
Also, for each of Bob's input choices, he learns either $x_0$, $x_1$, or $x_2 = x_0 \oplus x_1$ depending on the choice of input $a$ for Alice. 
Since $a$ is chosen uniformly at random, this is almost a proper XOT protocol (putting aside soundness for now). 
The only missing ingredient is that Bob knows he has $x_b$, but does not know which of the bits of $(y_0, y_1, y_2)$ it is. 
To fix this small issue, we simply have Alice tell Bob which bit it is. 

We formalize this protocol below and add in a \emph{test} step that helps to prevent cheating. Strictly speaking, Protocol~\ref{prot:XOTMS} should be thought of as a protocol framework, as we have not specified who creates the entangled state that Alice and Bob share --- either party can. 
We consider different security analyses of Protocol~\ref{prot:XOTMS}, corresponding to each of these different cases. 
In the trusted state analysis, the honest party (whomever that may be) creates the state, and in the untrusted state analysis it is the cheating party who does so. 

\begin{protocol}[Quantum XOT protocol from the magic square game] 
\label{prot:XOTMS} 
\hfill
\begin{enumerate}
\item Alice and Bob share the bipartite state used in the magic square strategy. 
\item \label{step:bob-y} Bob chooses $b \in \{ 0, 1, 2 \}$ uniformly at random, performs the measurements corresponding to $b$ in the magic square strategy on his state to get the outcome $(y_0,  y_1 ,
{y_2=y_0 \oplus y_1 \oplus 1}
)$, and sends $(y_0, y_1, y_2)$ to Alice. 
\item \label{step:alice-input} Alice chooses $a \in \{ 0, 1, 2 \}$ uniformly at random and sends $a$ to Bob. 
\item Alice performs the measurement corresponding to $a$ in her magic square strategy on her state to get the outcome $(x_0, x_1, 
{x_2=x_0 \oplus x_1}
)$. 
\item \label{step:alice-cheat} \textbf{Test:} If $(x_0, x_1, x_2)=(0,0,0)$ and Bob has sent $(y_0, y_1,y_2)$ such that $y_a = 1$, then Alice aborts. 
\item Alice outputs $(x_0, x_1)$ and Bob outputs $(b, y_a)$. \label{step:output} 
\end{enumerate}
\end{protocol}

\begin{figure}[!h]
\centering
\begin{tikzpicture}
\node at (0,0) {\includegraphics[scale=0.5]{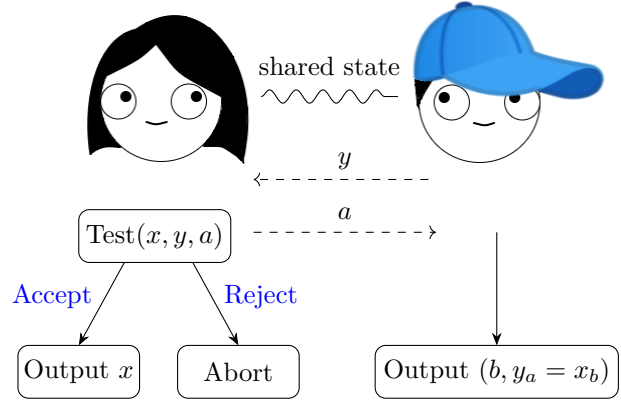}};
\draw[decorate, decoration = snake] (-1.1,0) -- (0.7,0);
\node at (-0.2, 0.4) {\small shared state};
\draw[->, dashed] (1.1,-1.1) -- (-1.2,-1.1);
\node at (0,-0.85) {\small $y$};
\draw[->, dashed] (-1.2, -1.8) -- (1.2,-1.8);
\node at (0, -1.55) {\small $a$};
\draw[rounded corners] (-3.5, -2.2) rectangle (-1.5, -1.5) node[pos=0.5] {\small Test$(x,y,a)$};
\draw[-Stealth] (-2.9,-2.2) -- (-3.5,-3.3);
\node at (-3.85,-2.65) {\small \color{blue} Accept};
\draw[rounded corners] (-4.3,-4) rectangle (-2.7,-3.3) node[pos=0.5] {\small Output $x$};
\draw [-Stealth] (-2,-2.2) -- (-1.4,-3.3);
\node at (-1.1,-2.65) {\small \color{blue} Reject};
\draw[rounded corners] (-2.2,-4) rectangle (-0.6,-3.3) node[pos=0.5] {\small Abort};
\draw[-Stealth] (2,-1.8) -- (2, -3.3);
\draw[rounded corners] (0.4, -4) rectangle (3.6, -3.3) node[pos=0.5] {\small Output $(b,y_a = x_b)$};
\end{tikzpicture}
\caption{Schematic depiction of the messages sent in Protocol~\ref{prot:XOTMS}.}
\end{figure}
   
Intuitively, the test step in Protocol~\ref{prot:XOTMS} serves as a weak test that the magic square winning condition is fulfilled (though the test only occurs with somewhat small probability). 
This provides a way to partially certify that Bob has measured his share of the state \emph{before} learning Alice's input. 
Note that if the test is not performed, Bob need not actually measure his state before receiving Alice's input. So even when the states are trusted, he has the potential to perfectly learn Alice's output after learning her input, by simply performing the same measurement as Alice on his state. 
However, forcing him to perform the magic square measurement ``deletes'' some of this information; a notion referred to as \emph{certified deletion} in~\cite{FM18}. 

We note that the test is asymmetric --- only Alice checks if Bob's outputs satisfy a certain condition. This is because if we were to consider the protocol without this test, \emph{only} Bob has any possibility of cheating: in that version Bob would not communicate at all with Alice, so she cannot learn his input (simply due to no-signalling). In our actual protocol with the test, security against Alice is no longer guaranteed via no-signalling. But as it turns out, we can still show Alice cannot perfectly learn Bob's input from his message (if Bob's measurement is trusted), so we do not add any further testing from Bob's side.

It is easily verified that Protocol~\ref{prot:XOTMS} is complete when Alice and Bob use a strategy that wins the magic square game with certainty. 
To prove its soundness, we compute the cheating probabilities using appropriate SDPs, as described in Section~\ref{sect:proofsDD}, for the trusted state and untrusted state cases. 
We summarize these cheating probabilities in
the following theorem, and below it we explain why Alice and Bob can cheat perfectly if they control the other's measurement devices. 

\newpage
\begin{theorem} 
\label{th:ms-xot} 
If Alice and Bob play the ``canonical'' strategy for the magic square game (see Section~\ref{sect:ms-intro}),  
then the cheating probabilities for Alice and Bob in Protocol~\ref{prot:XOTMS} are 
as shown
in the table below, rounded to 5 decimal places.
\begin{center}
\begin{tabular}{|c|c|c|}
\hline
\rule{0pt}{2.5ex} 
& $P^\xot_A$ & $P^\xot_B$ \\[0.5ex]
\hline
\rule{0pt}{2.5ex} Trusted state 
& 0.83333 
& 0.93628 
\\[0.5ex]
\hline
\rule{0pt}{2.5ex} Untrusted state 
& 0.87268 
& 0.94096 
\\[0.5ex]  
\hline
\rule{0pt}{2.5ex} Untrusted measurement 
& 1 
& 1 
\\[0.5ex]  
\hline
\end{tabular} 
\end{center}
\end{theorem}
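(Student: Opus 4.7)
The plan is to treat each row of the table separately. For the untrusted-measurement row, the entries $\PA = \PB = 1$ follow from explicit cheating strategies analogous to those given immediately after Protocol~\ref{prot:non-di-xot}. For $\PB$: since Bob controls Alice's magic-square measurement device, he programs it to deterministically output a fixed valid triple $(x_0,x_1,x_2)$ with $x \neq (0,0,0)$, for example $(1,1,0)$. Then Alice's output $(x_0,x_1)=(1,1)$ is known to Bob in advance, and the abort check is never triggered because it requires $x = (0,0,0)$. For $\PA$: Alice programs Bob's measurement device so that its outcome $y$ is an injective deterministic function of $b$ (e.g.\ $(1,1,1),(1,0,0),(0,1,0)$ for $b = 0,1,2$), and reads $b$ off the $y$ she receives in Step~\ref{step:bob-y}.

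For the trusted- and untrusted-state rows, we bound the cheating probabilities by casting them as semidefinite programs, extending the technique of Section~\ref{sect:old-proof}. For cheating Bob, a general attack consists of a shared state $\rho_{\calA\calB}$ (fixed in the trusted case, free in the untrusted case) together with POVM elements $\{M^{a,y}_{g_0 g_1}\}$ on $\calB$ satisfying the non-signalling constraint that $\sum_{g_0,g_1} M^{a,y}_{g_0 g_1}$ is independent of $a$ and sums to $\Id$ over $y$ (capturing the fact that Bob must commit to $y$ before learning $a$). With Alice's honest magic-square POVM $\{P^{a}_x\}$ on $\calA$ fixed, the cheating probability is
\begin{equation*}
\PB \;=\; \max\; \frac{1}{3}\sum_{a,x,y} \mathbf{1}[\neg\mathrm{abort}(a,x,y)]\,\tr\!\bigl[(P^{a}_x \otimes M^{a,y}_{x_0 x_1})\,\rho_{\calA\calB}\bigr],
\end{equation*}
which is an SDP in the trusted case. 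In the untrusted case the freedom in $\rho_{\calA\calB}$ is absorbed into Bob's operators by purifying and partial-tracing onto $\calA$, whose dimension is bounded by that of the fixed Alice POVM, again yielding a finite-dimensional SDP. The formulation for cheating Alice is analogous, with the roles of the two parties swapped, Bob's POVM fixed, and no abort constraint (since a cheating Alice has no incentive to abort). The tabulated upper bounds come from dual-feasible solutions of these SDPs, computed numerically.

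To establish tightness of the cheating-Alice bounds, we read an explicit optimal attack out of a primal-feasible SDP solution. In the trusted case this consists of a suitably chosen POVM on Alice's share of the magic-square state whose outcome, combined with post-processing of the $y$ received from Bob, identifies $b$ with probability $5/6$; in the untrusted case the primal optimiser provides an explicit entangled attack realising $\approx 0.87268$.

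The principal technical obstacle is formulating the SDPs with the correct non-signalling structure and carrying out the finite-dimensional reduction in the untrusted-state setting, so that the optimisations are tractable finite SDPs rather than infinite-dimensional programs. Producing explicit Alice-side strategies matching the SDP values --- rather than merely numerical upper bounds --- is a further non-trivial step that requires interpreting the optimal SDP variables as physical quantum operations.
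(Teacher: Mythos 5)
Your untrusted-measurement row is handled by explicit strategies that are essentially the same as the paper's (fixing Alice's output at a nonzero string, and encoding $b$ injectively into Bob's $y$, respectively). For the trusted- and untrusted-state rows, your SDP approach is the right general idea but is structured differently from the paper. The paper's cheating-Bob SDPs are of Gutoski--Watrous type: the optimization variables are Alice's (reduced) states $\rho_1$ and $\rho_2$ at two stages of the interaction, linked by the linear constraint $\tr_{\calG_B}(\rho_2) = \rho_1 \otimes \Id_\calA/3$, so the program is an SDP regardless of whether $\rho_1$'s Alice-marginal is constrained (trusted state) or free (untrusted state). For cheating Alice the paper casts the problem as state discrimination (trusted) and channel discrimination (untrusted), and tightness is automatic because those SDPs compute the optimal guessing probability exactly; your plan to read explicit attacks off primal optimizers is a valid alternative.

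The genuine gap is your untrusted-state, cheating-Bob bound. As written, the objective $\tr\big[(P^a_x\otimes M^{a,y}_g)\rho_{\calA\calB}\big]$ is bilinear in the free pair $(\rho_{\calA\calB},M)$ and hence not an SDP. Your remark that the freedom in $\rho_{\calA\calB}$ is ``absorbed into Bob's operators by purifying and partial-tracing onto $\calA$'' gestures at the needed fix --- pass to effective operators $\Lambda^{a,y}_g := \tr_{\calB}\big[(\Id_\calA\otimes M^{a,y}_g)\rho_{\calA\calB}\big]$ on $\calA$ alone, rewrite the non-signalling and normalization constraints linearly in $\Lambda$, and then argue that any feasible $\Lambda$ is realized by some state and sequential POVMs --- but you neither carry out this reduction nor give the realizability argument, which is precisely where the work lies. (You also take for granted that every non-signalling family $\{M^{a,y}_g\}$ arises from a sequential measurement; that is true but requires a short square-root/Naimark construction.) The paper's $\rho_1,\rho_2$ formulation sidesteps all of this by never introducing a free joint state, making the program manifestly linear. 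Until you make your linearization precise, your untrusted-state bound on $\PB$ is not justified.
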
 

\paragraph{Untrusted-measurement scenario.} 
As the last row of the table in Theorem~\ref{th:ms-xot} indicates, Protocol~\ref{prot:XOTMS} is not fully device-independent. 
To see this, note that if Bob were to control Alice's measurements, he can force ${(x_0, x_1) = (0, 1)}$ to always occur (regardless of the state, by performing a trivial measurement), and then he will never be tested. 
This also fixes the value of $(x_0, x_1)$, so Bob knows it perfectly.
Conversely, if Alice controls Bob's measurement, she can force the output to be such that $y_b = 1$ and the other two bits are $0$. Then Bob's message fully reveals $b$ to Alice.

\subsection{A fully device-independent protocol}   
\label{sect:DI}

We now aim to find an XOT protocol based on the magic square game that is fully device-independent. We shall first clarify the premises and assumptions in such a setting. 
Specifically, we shall suppose that Alice and Bob each possess one of a pair of black boxes, each of which will accept a classical input in $\{0,1,2\}$ and return a classical output in $\{0,1\}^3$. We shall only require a single use of these boxes. 
In the honest scenario, the boxes will simply be implementing the ideal magic square states and measurements. 
If either party is dishonest, however, we shall suppose only that the boxes' behaviour can be modelled as follows: the boxes share some entangled state between them, and when one of the boxes receives an input, it returns the output of some measurement (conditioned on the input) performed on its share of the quantum state, \emph{without broadcasting either its input or output} to any party other than the one holding the box\footnote{To take a slightly different perspective (used in e.g.~\cite{ADF+18}), we could suppose that the honest party is able to ``shield'' their lab in a way such that signals cannot be broadcast out of it once they have supplied the input to their box.}. 
While the honest party can only interact with the box as specified, the dishonest party is able to ``open'' any box they possess and perform arbitrary quantum operations or measurements on the share of the state held by that box. 
However, the dishonest party cannot interact with or change the behaviour of a box while it is in the honest party's possession.

This describes the general device-independent setting. 
For the purposes of this work, we shall impose a small additional assumption on the states and measurements the boxes implement, namely that they are described by a tensor product of Hilbert spaces, one for each box.
More general scenarios could be considered (for instance, one could require only that the two boxes' measurements commute), but we do not consider them within this work --- we shall briefly discuss this point when outlining our proof techniques below.

It would seem difficult to design a secure protocol under such weak assumptions. 
However, we can exploit the fact that many nonlocal games (including the magic square game) exhibit the important property of \emph{self-testing} or \emph{rigidity}: if the boxes win the game with probability equal to the maximum quantum value, then they must be implementing the ideal state and measurements (up to trivial isometries). 
A robust version of this statement is formally expressed as Lemma~\ref{lemm:rigidity} in the next section.

This suggests the following idea to make Protocol~\ref{prot:XOTMS} fully device-independent: we introduce an initial step where with some probability, either party may ask to perform a single-shot test of whether the boxes win the magic square game. They do this by sending over a uniformly random input for the other party's box, and asking the other party to send over the output that they get by entering this input into their box (in the honest case). In reality a dishonest party may discard the input sent by the honest party and produce this claimed output by unknown means; but this process of obtaining the output can still be modelled by a measurement, to which we can apply a self-testing result.

To prevent a dishonest party from always calling for a test and then declaring a pass, we shall enforce that a party calling for a test must then cede all control if the test is passed, performing a trivial XOT protocol that is perfectly secure against them. Specifically, if Alice calls for the test and it passes, then Protocol \ref{prot:Bad1} is performed, and if Bob calls for the test and it passes, then Protocol \ref{prot:Bad2} is performed. 
(Here we are using several properties of Protocols \ref{prot:Bad1} and \ref{prot:Bad2}. First, they have perfect completeness, hence ensuring that the overall protocol we construct is perfectly complete, as is the focus of this work. Second, they achieve perfect security against one party in a ``trivial'' device-independent sense, because they do not need trusted quantum states or measurements, albeit at the cost of being fully insecure against the other party.)\footnote{One may want to implement protocols that are better than Protocols \ref{prot:Bad1} and \ref{prot:Bad2} in this construction to improve its performance (for instance, protocols that achieve partial security against the tested party in addition to the perfect security against the testing party). However, proving such improved performance would require device-independent security guarantees on whatever protocols are used in place of Protocols \ref{prot:Bad1} and \ref{prot:Bad2}, and we do not know any other device-independent protocols for XOT.}
If a test is not called, the parties simply perform Protocol~\ref{prot:XOTMS}. This idea of the parties probabilistically performing a self-test or a device-dependent protocol, and implementing a trivial protocol if the self-test passes, was also used in \cite{ACK+14}. We describe this construction formally as Protocol~\ref{prot:di-xot} below. 

Qualitatively, Protocol~\ref{prot:di-xot} imposes a ``tradeoff'' for the cheating party between passing the test (if it is called) and the extent to which they deviate from the ideal implementation of Protocol~\ref{prot:XOTMS}. 
More specifically, if (say) Bob is dishonest, he could cheat perfectly if Alice decides to test, by having both boxes implement the honest behaviour, and he could also cheat perfectly if Alice decides not to test, but to do so he needs to modify Alice's box's behaviour (since our device-dependent arguments show that perfect cheating is impossible when Alice's box is honest). 
Since Alice's box must behave differently in the two scenarios and Bob cannot change how that box behaves once the protocol starts, Alice can constrain his cheating probability by randomly choosing between testing and not testing. 
A similar argument applies to cheating Alice.

Note that for this reasoning to be valid, it is important that the honest party's box cannot be allowed to detect whether it is being subjected to a magic square test or whether it is being used for Protocol~\ref{prot:XOTMS} (we are implicitly assuming that the honest party's box behaves the same way in both situations). 
An assumption of this nature is typically required in device-independent protocols that involve performing a test with some probability, e.g.~\cite{ACK+14,ADF+18}. 
In particular, 
as observed in~\cite{ACK+14}, we note that
if the behaviour of the boxes could be time-dependent, then the honest party must ensure they provide the input to their box at a fixed pre-determined time, regardless of whether the box is being tested or used for Protocol~\ref{prot:XOTMS}. 
  
\begin{protocol}[Quantum XOT protocol from the magic square game \emph{with extra test steps}] \label{prot:di-xot}
\hfill
\begin{enumerate}
\item Alice flips a coin whose outcome is 0 with probability $1-q^A$, to obtain $c^A \in \{0,1\}$, which she sends to Bob.
\item
\begin{enumerate}
\item If $c^A = 0$, Bob flips a coin whose outcome is 0 with probability $1-q^B$, to obtain $c^B \in \{0,1\}$, which he sends to Alice.
\item If $c^A = 1$, Alice picks uniformly random $a^A, b^A\in \{0,1,2\}$, enters $a^A$ into her box, and sends $b^A$ to Bob.
\end{enumerate}
\item
\begin{enumerate}
\item If $c^A = 0, c^B = 0$, Alice and Bob perform Protocol~\ref{prot:XOTMS} henceforth.
\item If $c^A=0, c^B = 1$, Bob picks uniformly random $a^B, b^B \in \{0,1,2\}$, enters $b^B$ into his own box, and sends $a^B$ to Alice.
\item If $c^A = 1$,  Alice receives $y^A$ from Bob, has her output $x^A$ from her own box, and checks if they satisfy $x^A_{b^A} = y^A_{a^A}$. 
If not, she aborts.
\end{enumerate}
\item If $c^A=0, c^B=1$, Bob receives $x^B$ from Alice, has his own output $y^B$, and checks if they satisfy $x^B_{b^B} = y^B_{a^B}$. If not, he aborts.
\item
\begin{enumerate}
\item If $c^A = 1$ and Alice has not aborted, Alice and Bob perform Protocol~\ref{prot:Bad1} henceforth.
\item If $c^A = 0, c^B = 1$ and Bob has not aborted, Alice and Bob perform Protocol~\ref{prot:Bad2} henceforth.
\end{enumerate}
\end{enumerate}
\end{protocol}

\begin{figure}[!h]
\centering
\begin{tikzpicture}
\node at (0,0) {\includegraphics[scale=0.55]{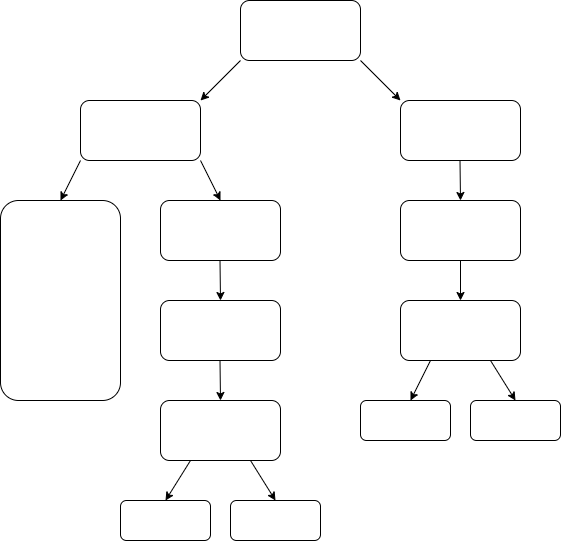}};
\node at (0.4,4.8) {\small Alice flips};
\node at (0.4, 4.4) {\small a coin};
\node at (-1.8,3.8) {\small \color{blue} $c^A = 0$};
\node at (2.7, 3.8) {\small \color{blue} $c^A=1$};

\node at (-2.7, 2.9) {\small Bob flips};
\node at (-2.7, 2.5) {\small a coin};
\node at (-4.75, 1.9) {\small \color{blue} $c^B=0$};
\node at (-0.65, 1.9) {\small \color{blue} $c^B=1$};
\node at (-4.2, -0.5) {\small Protocol~\ref{prot:XOTMS}};
\node at (-1.15, 1) {\small Bob sends};
\node at (-1.15, 0.6) {\small input $x^B$};
\node at (-1.15,-1) {\small Alice sends};
\node at (-1.15,-1.4) {\small $a^B$};
\node at (-1.15, -2.85) {\small Bob checks};
\node at (-1.15, -3.3) {\small if $x^B_{b^B} = y^B_{a^B}$};
\node at (-2.6, -4) {\small \color{blue} Accept};
\node at (-2.25, -4.9) {\small Protocol~\ref{prot:Bad2}};
\node at (0.2, -4) {\small \color{blue} Reject};
\node at (-0.1, -4.9) {\small Abort};

\node at (3.5, 2.9) {\small Alice sends};
\node at (3.5, 2.5) {\small input $y^A$};
\node at (3.5, 1) {\small Bob sends};
\node at (3.5, 0.6) {\small $b^A$};
\node at (3.5, -0.95) {\small Alice checks};
\node at (3.5, -1.4) {\small if $x^A_{b^A} = y^A_{a^A}$};
\node at (2.1, -2.1) {\small \color{blue} Accept};
\node at (2.45, -2.9) {\small Protocol~\ref{prot:Bad1}};
\node at (4.9, -2.1) {\small \color{blue} Reject};
\node at (4.6, -2.9) {\small Abort};
\end{tikzpicture}
\caption{Flowchart for Protocol~\ref{prot:di-xot}. 
}
\end{figure}

We give two soundness arguments for Protocol~\ref{prot:di-xot}. 
The first consists of explicit numerical bounds on the cheating probabilities, based on the family of SDPs known as the Navascu{\'e}s-Pironio-Ac{\'\i}n (NPA) hierarchy~\cite{NPA08}. 
We state the results as Theorem~\ref{th:ms-dixot} below, and give the proof in Section~\ref{sect:DINPA}. 
The second is an analytic proof
that the cheating probabilities are bounded away from $1$,
based on the robust self-testing bounds for the magic square game~\cite{WBMS16,CN16}. 
We state this result formally as Theorem~\ref{th:ms-dixot-2} below, and give the proof in Section~\ref{sect:continuity}.

Note that the NPA hierarchy works with commuting measurement operators rather than a tensor product structure, hence it might be possible that the first proof approach would work with the weaker assumption of commuting measurements instead. However, there is a technical issue that we begin our proofs with a reduction to projective measurements (see Appendix \ref{app:projs}), which we have not generalized to the commuting measurements case. 

\begin{theorem} 
\label{th:ms-dixot} 
Upper bounds on the cheating probabilities for  Protocol~\ref{prot:di-xot} (the fully device-independent scenario) with $q^A = 0.6$, $q^B = 0.6$ are given below, rounded to 5 decimal places.
\begin{table}[!h]
\centering
\begin{tabular}{|c|c|c|}
\hline
\rule{0pt}{2.5ex} Upper bounds & $\PA$ & $\PB$ \\[0.5ex]
\hline
\rule{0pt}{2.5ex} Fully Device-Independent 
& 0.96440 
& 0.99204 
\\[0.5ex]
\hline 
\end{tabular} 
\end{table} 
\end{theorem}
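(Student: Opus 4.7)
The plan is to express each cheating probability as the optimal value of a quantum optimization problem and then upper bound it using the NPA hierarchy of semidefinite programs. The essential point is that the dishonest party's strategy across all branches of Protocol~\ref{prot:di-xot} must be derived from a single initialization of the boxes' shared state and fixed POVMs; this coupling between branches is what ultimately produces bounds strictly below $1$.

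For cheating Bob, a strategy consists of: a joint state $\rho$ of the two boxes (with Bob holding a purification), Alice's box POVMs $\{A^a_x\}$ with $a \in \{0,1,2\}$ and $x \in \{0,1\}^3$ satisfying $x_0 \oplus x_1 \oplus x_2 = 0$, his own box's POVMs $\{B^b_y\}$ (used only if Alice calls a test), and cheating measurements performed on his share in the Protocol~\ref{prot:XOTMS} branch. The $c^B = 1$ branch caps Bob's cheating at $1/2$ (Protocol~\ref{prot:Bad2}), whereas any $c^B = 0$ strategy already achieves at least $1/2$ trivially, so it suffices to take $c^B = 0$ deterministically, giving
\begin{equation*}
\PB \le q^A \cdot P^{A\text{-test}}_{\mathrm{pass}} + (1-q^A)\cdot P^{\mathrm{XOT},B}_{\mathrm{cheat}},
\end{equation*}
where $P^{A\text{-test}}_{\mathrm{pass}}$ is the magic-square winning probability of the boxes when Alice performs the test (linear in $\rho$ and bilinear in $\{A^a_x\}, \{B^b_y\}$), and $P^{\mathrm{XOT},B}_{\mathrm{cheat}}$ is Bob's cheating probability in Protocol~\ref{prot:XOTMS} given the same $(\rho, \{A^a_x\})$ and arbitrary cheating POVMs on his side, whose SDP formulation is already developed in Section~\ref{sect:old-proof}.

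Both $P^{A\text{-test}}_{\mathrm{pass}}$ and $P^{\mathrm{XOT},B}_{\mathrm{cheat}}$ are linear functions of moments $\tr[O_1 O_2 \cdots O_k\,\rho]$ over words in the noncommutative algebra generated by Alice's POVM elements, Bob's box POVM elements, and Bob's cheating POVM elements, subject to the usual relations: POVM completeness, positivity, and commutativity between operators on spatially separated sides. The NPA hierarchy at a suitable level relaxes this to a single SDP whose optimum upper bounds $\PB$; solving it numerically with $q^A = 0.6$ yields $\PB \le 0.99204$. The construction for cheating Alice is analogous with the roles swapped: she controls the state and both boxes' POVMs, while Bob honestly implements his box. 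Her $c^A = 1$ branch contributes only $1/3$ (Bad1 reveals nothing about $b$), so adversarially choosing $c^A$ gives
\begin{equation*}
\PA \le \max\!\Bigl\{\tfrac{1}{3},\,(1-q^B)\,P^{\mathrm{XOT},A}_{\mathrm{cheat}} + q^B\,P^{B\text{-test}}_{\mathrm{pass}}\Bigr\},
\end{equation*}
with the two inner terms sharing the same underlying state and POVMs; the corresponding NPA-SDP with $q^B = 0.6$ yields $\PA \le 0.96440$.

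The principal obstacle will be computational. The moment matrix in the NPA hierarchy grows quickly with the level and with the number of free operators (here there are several POVM families on each side, plus cheating operators), so obtaining the bounds to five decimal places will require a careful choice of monomial basis and a reliable high-precision SDP solver. Exploiting the three-fold permutation symmetry of the magic-square inputs $a, b$, and the $\mathbb{Z}_2$ symmetry of the outputs, should help reduce the effective dimension of the SDP enough to reach a hierarchy level sufficient for the quoted bounds.
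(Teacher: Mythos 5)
Your proposal takes essentially the same approach as the paper: decompose the cheating probability as $q^A P^\test_B + (1-q^A) P^\prot_B$ (arguing Bob never chooses $c^B=1$) and the analogue for Alice, model the shared state and all measurement operators across branches as a single noncommutative optimization, and upper bound it via an NPA-style moment SDP. Two details in the paper that your outline glosses over are worth noting: (1) Bob's measurements in the Protocol~\ref{prot:XOTMS} branch are \emph{sequential} ($N^\prot_y$ followed by $N^\prot_{g|a}$), which forces the paper to use a non-standard index set including products $N^\prot_{g|m}N^\prot_y$ rather than a plain local-level-$k$ relaxation, and requires a separate dilation lemma (Lemma~\ref{lemm:proj}) to justify treating all measurements as projective, since sequential POVM correlations are not automatically PVM-realizable; (2) rather than the group-symmetry reduction you suggest, the paper shrinks the moment matrix by eliminating one POVM element per measurement via $\sum_x M_{x|a}=\Id$, yielding $|\ops|=490$ for cheating Bob. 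These are implementation choices rather than conceptual differences, so the overall proof strategy matches.
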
 

\begin{remark}
The choices for $q^A$ and $q^B$ in Theorem~\ref{th:ms-dixot} were made by computing the bounds for different choices of $q^A$ and $q^B$ in intervals of $0.1$, then simply taking the value that yields the best bounds on the cheating probabilities.
We note that the result obtained for $\PB$ is rather close to $1$; however, the significant figures shown here are within the tolerance levels of the solver. 
\end{remark}

\begin{theorem}\label{th:ms-dixot-2}
For any $q^A,q^B > 0$ in Protocol~\ref{prot:di-xot}, there exists some $\delta > 0$ such that $P^\xot_A, P^\xot_B \leq 1 - \delta$. 
\end{theorem}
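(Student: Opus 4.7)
The plan is to exploit the robust rigidity of the magic square game (Lemma~\ref{lemm:rigidity}, from~\cite{WBMS16,CN16}): whenever the boxes' joint strategy $\sigma$ wins the magic square game with probability $\omega(\sigma) \geq 1 - \eta$, some local isometry maps $\sigma$ to within $O(\sqrt{\eta})$ (in operator distance on the shared state and the measurement operators) of the ideal magic square strategy. A cheating party is then caught in a tradeoff: to cheat in the underlying Protocol~\ref{prot:XOTMS} effectively they must deviate from the ideal strategy, but such a deviation forces them to fail the extra test step with non-negligible probability.

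I focus on cheating Alice; the argument for cheating Bob is symmetric. Assume $q^A, q^B \in (0,1)$ (for $q^A$ or $q^B$ at the boundary, Protocol~\ref{prot:Bad1} or~\ref{prot:Bad2} is essentially always invoked and the corresponding cheating probability can reach $1$). If Alice sets $c^A = 1$ she is confined to Protocol~\ref{prot:Bad1}, in which she has no information about $b$ and $P_A \leq 1/3$; so for any $\delta \leq 2/3$ her worst case for us is $c^A = 0$. Conditional on $c^A = 0$, the honest Bob plays Protocol~\ref{prot:XOTMS} with probability $1-q^B$ and otherwise tests Alice's box; in the latter case the test passes with probability $\omega(\sigma)$, and Alice then wins for free via Protocol~\ref{prot:Bad2}. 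Writing $P_A^{(1)}(\sigma)$ for her cheating probability when Protocol~\ref{prot:XOTMS} is executed with boxes implementing $\sigma$, this yields
\begin{equation*}
P_A \;\leq\; (1-q^B)\, P_A^{(1)}(\sigma) \,+\, q^B\, \omega(\sigma).
\end{equation*}

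Next, I combine rigidity with a continuity argument on the SDP that defines $P_A^{(1)}$. The SDP set up in Section~\ref{sect:old-proof} has coefficient matrices that are continuous functions of Bob's box's state and measurement operators, and SDP values are continuous in their data; together with rigidity this gives a function $g$ with $g(\eta) \to 0$ as $\eta \to 0$ such that $\omega(\sigma) \geq 1 - \eta$ implies $P_A^{(1)}(\sigma) \leq P_A^{(1)}(\sigma_{\text{ideal}}) + g(\eta) \leq 0.87268 + g(\eta)$, where the last step uses the ``untrusted state'' row of Theorem~\ref{th:ms-xot}. Let $\mu := 1 - 0.87268 > 0$ and fix $\eta > 0$ small enough that $g(\eta) \leq \mu/2$. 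If $\omega(\sigma) \leq 1-\eta$, then $P_A \leq (1-q^B) + q^B(1-\eta) = 1 - q^B \eta$; if $\omega(\sigma) \geq 1 - \eta$, then $P_A \leq (1-q^B)(0.87268 + g(\eta)) + q^B \leq 1 - (1-q^B)\mu/2$. Hence $P_A \leq 1 - \delta_A$ with $\delta_A := \min(q^B \eta,\, (1-q^B)\mu/2) > 0$, and an entirely analogous argument, swapping the roles of $q^A$ and $q^B$ and using $P_B^{(1)}(\sigma_{\text{ideal}}) \leq 0.94096$, gives a corresponding $\delta_B > 0$ for $P_B$. Taking $\delta := \min(\delta_A, \delta_B)$ finishes the proof.

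The main obstacle is the quantitative continuity step: turning the operator-norm closeness guaranteed by robust self-testing into an explicit modulus $g$ for the value of the SDP defining $P_A^{(1)}$. In spirit this is merely Lipschitz continuity of an SDP in its data, but to extract a usable $g$ one must perturb near-optimal primal/dual solutions of the SDP from Section~\ref{sect:old-proof}, controlling dual-variable norms via strong duality, and propagate the $O(\sqrt{\eta})$ rigidity bound through all of this. Essentially all of the quantitative bookkeeping in the proof lives in that step; the remainder of the argument is the short case split above.
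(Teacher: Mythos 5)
Your overall architecture matches the paper's: expose a tradeoff between passing the test branch and cheating in the Protocol~\ref{prot:XOTMS} branch, and close the argument with the rigidity of the magic square game. Your way of phrasing the tradeoff as a direct convex-combination bound $P_A \leq (1-q^B)P_A^{(1)}(\sigma) + q^B\omega(\sigma)$ on the $c^A=0$ branch (after observing the $c^A=1$ branch contributes at most $1/3$) is essentially the same statement as the paper's Lemma~\ref{lemm:gapped} in a slightly different form, and your subsequent case split on $\omega(\sigma)\lessgtr 1-\eta$ is equivalent. Using the untrusted-state constant $0.87268$ in place of the trusted-state one ($0.83333$) is sound but looser than necessary, since rigidity delivers you the trusted-state scenario up to isometries and ancillas (cf.\ Lemma~\ref{lemm:BcheatDD}).

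The substantive difference is the key technical step, and here your description has a gap worth naming. You write that robust self-testing gives closeness ``in operator distance on the shared state and the measurement operators,'' and then propose to invoke Lipschitz continuity of an SDP in its coefficient data, perturbing primal/dual solutions. But Lemma~\ref{lemm:rigidity} does \emph{not} give operator-norm closeness of $N_{y|b}$ (or $V_B N_{y|b} V_B^\dagger$) to the ideal $N^\ms_{y|b}$: it only bounds the $2$-norm distance of the \emph{measured state vectors} $(V_A\otimes V_B)(\Id\otimes N_{y|b})\ket{\rho}$ from the ideal ones. A plan that feeds operator-norm closeness of the measurements into generic SDP-data perturbation theory therefore does not quite start from what rigidity provides. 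For cheating Alice this can be repaired rather cheaply, because her cheating probability in Protocol~\ref{prot:XOTMS} is a state-discrimination value and rigidity does give trace-norm closeness of the \emph{post-measurement} states $\rho'_b$ to $\rho^\ms_b\otimes\sigma_{\calJ_A}$, and the state-discrimination value is $1$-Lipschitz in trace norm. But for cheating Bob the ``SDP defining $P_B^{(1)}$'' involves sequential measurements and does not reduce to state discrimination, so the perturb-the-SDP route is genuinely awkward. The paper's proof sidesteps SDP perturbation entirely: it constructs padded operators $\tilde{N}^\prot_{y} = V_B N^\prot_y V_B^\dagger + \Pi_{V_B}/2$ (and similarly $\tilde{N}^\prot_{g|m}$) that form valid measurements on the ideal space, uses $\tilde{N}^\prot_{g|m}\tilde{N}^\prot_y V_B = V_B N^\prot_{g|m}N^\prot_y$ together with the reverse triangle inequality on state vectors to bound each probability term directly, and then appeals to the device-dependent bound (Lemma~\ref{lemm:BcheatDD}) on the ideal side. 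This works uniformly for both parties and makes the $O(\eps^{1/4})$ (not $O(\sqrt{\eps})$) bound from Lemma~\ref{lemm:rigidity} explicit. In short: your strategy is right and your bookkeeping around the tradeoff is correct, but the continuity step as you describe it is not a straightforward ``Lipschitz in SDP data'' argument; you would need to recast it in terms of closeness of measured states, which is essentially what the paper's padded-operator construction accomplishes.
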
  

\subsection{Our security analysis approach} 

To analyze the security of the protocols for which perfect cheating is not possible, we require a way to calculate or upper bound the optimal cheating probabilities. 
For most of the bounds in this paper, we use the formalism of semidefinite programming to calculate optimal cheating probabilities. In the device-dependent case, i.e., in the proofs of Theorems \ref{th:Oldprotocol} and \ref{th:ms-xot}, our approach is similar to the SDP-based analysis of coin-flipping in Kitaev~\cite{Kit03} and general protocols in Gutoski and Watrous~\cite{GW07}.  

A semidefinite program (SDP) is an optimization problem of the form 
\begin{align*} 
\text{maximize} :                & \quad \tr(AX) \\
\text{subject to} : & \quad \Phi(X) = B \\
                  & \quad X \in \Pos(\calX)  
\end{align*}  
where $\calX$ and $\calY$ are complex Euclidean spaces, $\Phi : \Herm(\calX) \to \Herm(\calY)$ is a linear, Hermiticity-preserving transformation, and $C \in \Herm(\calX)$ and $B \in \Herm(\calY)$ are Hermitian. 
When the spaces are clear from context, we write $X \succeq Y$ 
to mean
that $X - Y$ is (Hermitian) positive semidefinite, noting that the special case $X \succeq 0$ simply means $X$ is positive semidefinite. 
We use $X \succ Y$ to mean that $X - Y$ is positive definite.

The \emph{dual} of the above optimization problem (called the \emph{primal} in this context) is the problem
\begin{align*}
\text{minimize:} & \quad \tr(BY) \\
\text{subject to:} & \quad \Phi^*(Y) \succeq A \\
  & \quad Y \in \Herm(\calY).
\end{align*}
Under mild conditions, for example that there exists an $X \succ 0$ that is primal feasible and a $Y$ that is dual feasible, the optimal values of the primal and dual problems are equal.

In the device-independent case, the proof of Theorem \ref{th:ms-dixot} is via the NPA hierarchy, which is an infinite hierarchy of SDPs (see Appendix~\ref{NPA}). The SDP corresponding to any finite ``level'' of the NPA hierarchy gives an upper bound on the cheating probabilities in Theorem \ref{th:ms-dixot}. Note however that the NPA hierarchy is inherently an optimization over all possible states \emph{and} measurements, and thus is only suitable to be applied to device-independent protocols (where additional test steps are included to certify the state and/or measurements). Indeed, if we were to attempt to use the NPA hierarchy to bound the cheating probabilities of Protocols~\ref{prot:non-di-xot}~and~\ref{prot:XOTMS}, we would get the trivial value of $1$. This is why we set up specialized (and finite) SDPs for Theorems \ref{th:Oldprotocol} and \ref{th:ms-xot} instead.\footnote{Strictly speaking, although Protocol \ref{prot:XOTMS} is used as an intermediate protocol to construct Protocol \ref{prot:di-xot}, our proof of Theorem \ref{th:ms-dixot} via the NPA hierarchy does not require Theorem \ref{th:ms-xot}. However, we designed Protocol \ref{prot:di-xot} after making sure that the device-dependent component of it is secure, and thus Theorem \ref{th:ms-xot} is a useful intermediate step: the analytical proof of Theorem \ref{th:ms-dixot-2}, which does require Theorem \ref{th:ms-xot}, reflects this.} 
    
Finally, although our cheating probabilities are obtained by numerically solving SDPs, we note the following points:

\begin{itemize}
\item First, since each value is computed as an SDP, the dual SDPs yield certified upper bounds on the cheating probabilities. (The primal and dual values of all SDPs we solved are identical to more significant figures than displayed in our theorem statements.) Should there be concerns about floating-point errors, one could in principle find rational approximations to the dual solutions and thus obtain a bound certified by exact arithmetic, but this process is tedious and will not be included in this work.
\item Second, we provide the code used (see below) so the SDP implementation may be verified and/or applied for any protocol modifications of interest.
\item Third, the ``punchline'' of this work is that there exists a fully device-independent protocol where Alice and Bob's cheating probabilities are bounded away from $1$. 
We show this theoretically and also provide the numerical upper bounds to give an idea of what the cheating probabilities may look like. 
\end{itemize} 

\subsection*{Computational platform} 

The optimal value of most semidefinite programs can be approximated to within any constant precision efficiently. 
In particular, our computations were performed using the MATLAB packages QETLAB~\cite{qetlab} and YALMIP \cite{yalmip} with solver MOSEK~\cite{mosek}. Some of the calculations reported here were performed using the Euler cluster at ETH Z\"{u}rich. The code used to compute the SDPs will be maintained at 
\begin{equation*} 
\href{https://github.com/ernesttyz/dixot}{\textup{https://github.com/ernesttyz/dixot}}. 
\end{equation*}

\subsection{Paper organization} 

The next section details the magic square game and a rigidity result which is needed for this work. 
In Section~\ref{sect:proofsDD}, we use semidefinite programming to bound the cheating probabilities of Protocols~\ref{prot:non-di-xot}~and~\ref{prot:XOTMS}. 
The security of our device-independent protocol is analyzed numerically in Section~\ref{sect:DINPA}, and we refer the reader to Appendix~\ref{NPA} for background on the NPA hierarchy which we use in the analysis. 
We conclude in Section~\ref{sect:continuity} with the analytical bounds on the device-independent protocol. 

\section{The magic square game strategy and a rigidity lemma}
\label{sect:ms-intro} 
  
In this section, we discuss the optimal magic square game strategy and present a rigidity result required for this work.  
  
The optimal quantum strategy for magic square can be described as follows. Alice and Bob share the state
\begin{equation} \ket{\Psi^\ms} = \frac{1}{\sqrt{2}}(\ket{00}_{\calX_0\calY_0} + \ket{11}_{\calX_0\calY_0})\frac{1}{\sqrt{2}}(\ket{00}_{\calX_1\calY_1} + \ket{11}_{\calX_1\calY_1}) \end{equation}
with Alice holding the registers $\calX_0\calX_1$, and Bob holding the registers $\calY_0\calY_1$. The measurements of Alice and Bob are given in Table~\ref{tab:ms-meas}. 

\begin{table}[!h]
\centering
\begin{tabular}{|c|c|c|c|}
\hline
\diagbox{$a$}{$b$} & 0 & 1 & 2 \\\hline
\rule{0pt}{4.5ex}0 & $\begin{aligned} \Pi^0_{00} = \kb{0}\otimes\Id \\ \quad \end{aligned}$ & $\begin{aligned} \Pi^0_{01} = \Id\otimes\kb{0} \\ \quad \end{aligned}$ &  $\begin{aligned} \Pi^0_{02} & = \kb{0}\otimes\kb{0} \\ & \quad + \kb{1}\otimes\kb{1} \end{aligned}$ \\[0.5cm]
  & $\begin{aligned} \Pi^1_{00} = \kb{1}\otimes\Id \\ \quad \end{aligned}$ & $\begin{aligned} \Pi^1_{01} = \Id\otimes\kb{1} \\ \quad \end{aligned}$ & $\begin{aligned} \Pi^1_{02} & = \kb{0}\otimes\kb{1} \\ & \quad + \kb{1}\otimes\kb{0} \end{aligned}$ \\[0.5cm] \hline
\rule{0pt}{5ex}1 & {$\begin{aligned} \Pi^0_{10} = \Id\otimes \kb{+} \\ \quad \end{aligned}$} & {$\begin{aligned} \Pi^0_{11} = \kb{+}\otimes\Id \\ \quad \end{aligned}$} & {$\begin{aligned} \Pi^0_{12} & = \kb{+}\otimes\kb{+} \\ & \quad + \kb{-}\otimes\kb{-}\end{aligned}$} \\[0.5cm]
  &  $\begin{aligned} \Pi^1_{10} = \Id\otimes\kb{-} \\ \quad \end{aligned}$ & $\begin{aligned} \Pi^1_{11} =  \kb{-}\otimes\Id \\ \quad \end{aligned}$ & {$\begin{aligned}\Pi^1_{12} & = \kb{+}\otimes\kb{-} \\ & \quad + \kb{-}\otimes\kb{+}\end{aligned}$} \\[0.5cm]\hline
\rule{0pt}{5ex}2 & $\begin{aligned} \Pi^0_{20} & = \kb{1}\otimes\kb{+} \\ & \quad + \kb{0}\otimes\kb{-}\end{aligned}$ & $\begin{aligned} \Pi^0_{21} & = \kb{+}\otimes\kb{1} \\ & \quad + \kb{-}\otimes\kb{0}\end{aligned}$ & $\begin{aligned} \Pi^0_{22} & = \kb{+i}\otimes\kb{+i} \\ & \quad + \kb{-i}\otimes\kb{-i}\end{aligned}$ \\[0.5cm]
  & $\begin{aligned} \Pi^1_{20} & = \kb{0}\otimes\kb{+} \\ & \quad + \kb{1}\otimes\kb{-}\end{aligned}$ & $\begin{aligned} \Pi^1_{21} & = \kb{+}\otimes\kb{0} \\ & \quad + \kb{-}\otimes\kb{1}\end{aligned}$ & $\begin{aligned} \Pi^1_{22} & = \kb{+i}\otimes\kb{-i} \\ & \quad + \kb{-i}\otimes\kb{+i}\end{aligned}$ \\[0.5cm] \hline
\end{tabular}
\caption{Possible measurements for either party in the quantum strategy for magic square.}
\label{tab:ms-meas}
\end{table}

On input $a$, Alice simultaneously performs the three 2-outcome measurements $\{(\Pi^0_{ab}, \Pi^1_{ab})\}_b$ in the row indexed by $a$ in Table~\ref{tab:ms-meas} (it can be checked that the three measurements in every row are compatible, so they can be performed simultaneously) on her registers $\calX_0\calX_1$. Her output $(x_0, x_1, x_2)$ is the output of the three measurements (in order). Similarly, on input $b$, Bob simultaneously performs the three 2-outcome measurements 
$\{(\Pi^0_{ab}, \Pi^1_{ab})\}_a$
in the column indexed by $b$ 
(the three measurements in every column are also compatible)
on his registers $\calY_0\calY_1$, and gives the outcomes of the three measurements as his output $(y_0,y_1,y_2)$.

Clearly, the measurement Alice performs to output $x_b$ is the same as the measurement Bob performs to output $y_a$. Since these measurements are performed on maximally entangled states, one can show Alice and Bob always get the same outcome for $x_b$ and $y_a$. Also, it can be verified that these measurements always produce outcomes satisfying the parity conditions $x_0\oplus x_1\oplus x_2 = 0$ and $y_0\oplus y_1\oplus y_2 = 1$ (this holds regardless of the state). It can be shown that the output distribution is uniform over all combinations that win the magic square game, i.e.~$\Pr(xy|ab) = 1/8$ if $x_b=y_a$ (and $x,y$ satisfy the parity conditions), and $\Pr(xy|ab) = 0$ otherwise. 

The above description views Alice and Bob as performing 8-outcome measurements (via a sequence of three 2-outcome measurements). However, since for \emph{any} state the measurements always produce outputs satisfying the parity conditions, we can equivalently suppose Alice and Bob measure to determine only $(x_0,x_1)$ and $(y_0,y_1)$, with the last bit for each determined by the parity conditions. (This is consistent with the way we defined the magic square game earlier.) These are 4-outcome measurements that can be expressed in terms of the $\Pi$ operators from Table~\ref{tab:ms-meas} as
\begin{equation}
M^\ms_{x_0x_1|a} = \Pi^{x_0}_{a0}\Pi^{x_1}_{a1} \quad \quad N^\ms_{y_0y_1|b} = \Pi^{y_0}_{0b}\Pi^{y_1}_{1b}. \label{eq:honPVM}
\end{equation}
It can be checked that each $M^\ms_{x_0x_1|a}$ and $N^\ms_{y_0y_1|b}$ is a rank-1 projector. Since the measurements for $x_0$ and $x_1$ (resp.~$y_0$ and $y_1$) commute for every $a$ (resp.~$b$), the product of the $\Pi$ operators in each case can be taken in either order.

Certain nonlocal games exhibit the property that the quantum strategies achieving their optimal values are essentially unique. That is, if a quantum strategy wins with probability within $\eps$ of the optimal probability, that strategy must be $\delta(\eps)$-close to the ideal strategy for the game, up to certain local operations. 
This property of rigidity or self-testing was described first in~\cite{MY98}, and has been shown for a variety of nonlocal games since.

\cite{WBMS16} originally gave a proof of the rigidity of a version of the magic square game which is slightly different from ours. \cite{CN16} showed that the rigidity statement also holds for the version of the magic square game we use. However, both of these results show the self-testing of some operators that are related to Alice and Bob's measurement operators in the magic square game, but not the measurement operators themselves. It is not immediately clear how to self-test the measurement operators themselves from their results. In Appendix~\ref{sect:rigid-proof} we derive the following lemma for self-testing of the measurement operators of the magic square strategy.

\begin{lemma}\label{lemm:rigidity}
Consider any state $\ket{\rho}$ on registers $\calX\calY$ and projective measurements $M_{x|a},N_{y|b}$ such that $M_{x|a}$ acts only on $\calX$ and $N_{y|b}$ acts only on $\calY$. If this state and measurements win the magic square game with probability $1-\eps$, then there exist local isometries $V_A: \calX \to \calX_0\calX_1\calJ_A$ and $V_B: \calY \to \calY_0\calY_1\calJ_B$ and a state $\ket{\jk}$ on $\calJ_A\calJ_B$ such that for all $a, b, x, y$, we have 
\begin{gather*}
\norm{(V_A \otimes V_B) \ket{\rho} - \ket{\Psi^\ms}\otimes\ket{\jk}}_2 \leq O(\eps^{1/4}),\\
\norm{(V_A \otimes V_B) (M_{x|a} \otimes \Id) \ket{\rho} - ((M^\ms_{x|a} \otimes \Id) \ket{\Psi^\ms})\otimes\ket{\jk}}_2 \leq O(\eps^{1/4}),\\
\norm{(V_A \otimes V_B) (\Id \otimes N_{y|b}) \ket{\rho} - ((\Id \otimes N^\ms_{y|b}) \ket{\Psi^\ms})\otimes\ket{\jk}}_2 \leq O(\eps^{1/4}),
\end{gather*}
where $\ket{\Psi^\ms}$, $M^\ms_{x|a}$, and $N^\ms_{y|b}$ denote the ideal state and measurements in the magic square game, 
as described above.
\end{lemma}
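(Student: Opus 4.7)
The plan is to derive Lemma~\ref{lemm:rigidity} from the existing rigidity results of \cite{WBMS16} and \cite{CN16}, which already self-test certain two-outcome observables associated with a near-optimal magic square strategy, and to promote those two-outcome statements to the four-outcome measurement operators $M^\ms_{x_0x_1|a}$ and $N^\ms_{y_0y_1|b}$ appearing in the present lemma. The key structural observation is that any four-outcome PVM factorizes through its marginals: setting $P^{x_0}_{a0} := \sum_{x_1} M_{x_0x_1|a}$ and $P^{x_1}_{a1} := \sum_{x_0} M_{x_0x_1|a}$ gives commuting projective measurements satisfying $M_{x_0x_1|a} = P^{x_0}_{a0}P^{x_1}_{a1}$, exactly mirroring the ideal factorization $M^\ms_{x_0x_1|a} = \Pi^{x_0}_{a0}\Pi^{x_1}_{a1}$, with the analogous statement on Bob's side.

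First I would use the existing self-testing results to produce local isometries $V_A, V_B$ and a junk state $\ket{\jk}$ such that $(V_A\otimes V_B)\ket{\rho} = \ket{\Psi^\ms}\otimes\ket{\jk} + O(\eps^{1/4})$, and to obtain, for each marginal two-outcome projector, an approximate intertwining
\[
(V_A\otimes V_B)(P^{x_i}_{ai}\otimes \Id)\ket{\rho} = (\Pi^{x_i}_{ai}\otimes \Id)\ket{\Psi^\ms}\otimes\ket{\jk} + O(\eps^{1/4}),
\]
with the symmetric statement for Bob. If \cite{WBMS16,CN16} formally only give self-testing of observables $O_{ab} = P^0_{ab} - P^1_{ab}$, this step includes the one-line conversion $P^{x}_{ab} = \tfrac{1}{2}(\Id + (-1)^{x}O_{ab})$, which preserves the error rate up to a constant.

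Second, I would chain these two-outcome intertwinings to obtain the four-outcome statement. Expanding $M_{x_0x_1|a} = P^{x_0}_{a0}P^{x_1}_{a1}$ and inserting the isometry, the chain
\begin{align*}
(V_A\otimes V_B)(M_{x_0x_1|a}\otimes\Id)\ket{\rho}
&\approx (\Pi^{x_0}_{a0}\otimes\Id)(V_A\otimes V_B)(P^{x_1}_{a1}\otimes\Id)\ket{\rho} \\
&\approx (\Pi^{x_0}_{a0}\Pi^{x_1}_{a1}\otimes\Id)\ket{\Psi^\ms}\otimes\ket{\jk}
\end{align*}
follows by successive applications of the intertwining relations. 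Since $\Pi^{x_0}_{a0}$ is a contraction, the triangle inequality preserves the $O(\eps^{1/4})$ error rate; Bob's bound is handled symmetrically, and the bound on $\ket{\rho}$ alone is inherited directly.

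The main obstacle will be justifying the first approximation in the chain above, since the existing self-test for $P^{x_0}_{a0}$ is stated relative to the original state $\ket{\rho}$ rather than relative to the post-selected vector $(P^{x_1}_{a1}\otimes\Id)\ket{\rho}$. To handle this I would reformulate the self-test as an operator-level approximate commutation between the projector and the isometry, via the standard mirroring trick: the magic square winning condition $x_b = y_a$ forces $\|(P^{x}_{ab}\otimes\Id - \Id\otimes Q^{x}_{ab})\ket{\rho}\| = O(\eps^{1/2})$ where $Q^{x}_{ab}$ is Bob's corresponding projector, and since $V_A$ acts only on Alice's space it commutes with the mirrored projector on Bob's side. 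Propagating these substitutions through the chain introduces only constant multiplicative factors, preserving the $O(\eps^{1/4})$ exponent established in \cite{WBMS16}. The bookkeeping to track where each $O(\eps^{1/2})$ mirroring error is incurred versus where an $O(\eps^{1/4})$ self-test error is applied is the most delicate portion of the argument.
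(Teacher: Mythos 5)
Your high-level decomposition matches the paper's for the inputs $a,b\in\{0,1\}$: the marginal projectors $P^{x_b}_{ab}$ you define are exactly the projectors whose $\pm1$ observables are the $X_k, Z_k$ used in the paper, and recovering the four-outcome $M^\ms_{x_0x_1|a}$ from them is what the paper does (it takes linear combinations of the McKague intertwinings for $X^pZ^q$ with $p,q\in\{0000,1000,0100,1100\}$, rather than chaining, which is cleaner because McKague's lemma already certifies the \emph{products} of the $X_k, Z_k$ on $\ket{\rho}$ and so sidesteps the state-dependence obstacle you flag). If you do want to chain, note that your mirroring sketch is not quite complete: after mirroring $P^{x_0}_{a0}$ to Bob's side to obtain a term of the form $(V_A P^{x_1}_{a1}\otimes V_B Q^{x_0}_{a0})\ket{\rho}$, you still need an intertwining relation for $Q^{x_0}_{a0}$ evaluated on a state other than $\ket{\rho}$, which is the same problem shifted to Bob. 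Using McKague's product form directly is the standard resolution, and is what the paper does.

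The more serious gap is that your argument silently assumes the existing self-testing results hand you approximate intertwinings for \emph{all} six marginal projectors $P^{x_0}_{a0}, P^{x_1}_{a1}$ (and Bob's analogues), but \cite{WBMS16,CN16} and McKague's lemma only certify the observables $X_k, Z_k$ built from inputs $a,b\in\{0,1\}$. The marginals appearing in your factorization of $M_{x_0x_1|2}=P^{x_0}_{20}P^{x_1}_{21}$ and $N_{y_0y_1|2}=Q^{y_0}_{02}Q^{y_1}_{12}$ correspond to the $W_k$ observables, which are \emph{not} in the list of operators covered by the cited rigidity theorems. Handling these is the bulk of the paper's work: one must first derive on the actual state relations like $\|(W_4 - X_1X_2)\ket{\rho}\|_2 \leq O(\sqrt{\eps})$, $\|(W_3 - Z_1Z_2)\ket{\rho}\|_2 \leq O(\sqrt{\eps})$, $\|(W_1 + X_4Z_3)\ket{\rho}\|_2 \leq O(\sqrt{\eps})$, etc., from the near-winning conditions on the $a=2$ and $b=2$ rows/columns, then combine them with the McKague intertwinings for the $X,Z$ products and with the exact identities $W^\ms_4\ket{\Psi^\ms}=X^\ms_1X^\ms_2\ket{\Psi^\ms}$, $W^\ms_1\ket{\Psi^\ms}=-X^\ms_4Z^\ms_3\ket{\Psi^\ms}$, and so on, satisfied by the ideal strategy. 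Without this step, the input-$2$ measurements are not self-tested, and the lemma's conclusion fails for $a=2$ and $b=2$.
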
 

\section{On the security of Protocol~\ref{prot:non-di-xot} and Protocol~\ref{prot:XOTMS}} 
\label{sect:proofsDD}

In this section we formulate the cheating probabilities as SDPs and solve them numerically to get their values.

\subsection{Security analysis of Protocol~\ref{prot:non-di-xot}} 
\label{sect:old-proof}

In this section, we prove that when the devices are trusted in Protocol~\ref{prot:non-di-xot}, then $\PA = 1/2$ and $\PB = 3/4$. 
Before continuing, recall that Protocol~\ref{prot:non-di-xot} is an adaptation of the protocol in~\cite{CKS13} where Alice's actions are the exact same and so are the intentions of a dishonest Bob. 
Therefore, we can import $\PB = 3/4$ directly from the security analysis of that protocol. 

The rest of this subsection is devoted to proving $\PA = 1/2$. 
Since Bob never aborts, all we need to ascertain is the ability for Alice to learn $b$ from her information contained in the first message. 
To do this, she must infer $b$ from the ensemble
\begin{equation} 
\left\{ \left( \frac{1}{3}, \rho_b = \tr_{\calY} (\kb{\psi^+_b} \right) : b \in \{ 0, 1, 2 \} \right\}. 
\end{equation} 
This is known as the \emph{quantum state discrimination problem}, and the optimal guessing probability can be written as the following SDP:
\begin{center}
\begin{minipage}{2in}
\centerline{\underline{Primal problem}}\vspace{-5mm}
\begin{align*}
\text{maximize:}   & \quad \frac{1}{3} \sum_{b \in \{0,1,2\}} \tr({E_b}{\rho_b}) \\
\text{subject to:} & \quad \sum_{b \in \{0,1,2\}} E_b = \Id_\calX \\
                   & \quad \forall b \, E_b \in \Pos(\calX)
\end{align*}
\end{minipage}
  \hspace*{12mm}
\begin{minipage}{2in}
\centerline{\underline{Dual problem}}\vspace{-5mm}
\begin{align*}
\text{minimize:}               & \quad \tr(\sigma) \\
\text{subject to:} & \quad \forall b \, \, \frac{1}{3} \rho_b \preceq \sigma \\
                   & \quad \sigma \in \Herm(\calX).
\end{align*}
\end{minipage}
\end{center} 
Note that the success probability can be seen to be 
given by this SDP
as it is a maximization over POVMs and the objective function is the probability of that POVM measurement producing the correct guess for $b$. 

Now, if Alice uses the POVM 
\begin{equation} 
\{E_0, E_1, E_2\} = \{\kb{0} + \kb{2}, \kb{1}, 0\}, 
\end{equation} 
we can see that 
\begin{equation} 
\PA 
\geq \frac{1}{3} \sum_{b \in \{0,1,2\}} \tr(E_b \rho_b)
= \frac{1}{3} + \frac{1}{3} \cdot \frac{1}{2} + 0 = \frac{1}{2}. 
\end{equation} 
Effectively what this measurement does is measure in the computational basis, then assign the outcomes $\ket{0}$ and $\ket{2}$ to the guess $b = 0$ and the outcome $\ket{1}$ to the guess $b = 1$. 
Note that $b = 2$ is never guessed in this strategy. 
All that remains to show is that $\PA \leq 1/2$. 
For this, we use the dual problem. 
Consider the dual feasible solution 
\begin{equation} 
\sigma = \frac{1}{6} \Id_{\calX}. 
\end{equation} 
It can be checked that $\sigma$ satisfies the dual constraints, i.e., $\sigma \succeq \frac{1}{3} \rho_b$ for all $b \in \{ 0, 1, 2 \}$. 
Since $\tr(\sigma) = 1/2$, we have that $\PA \leq 1/2$ by weak duality.  

\subsection{Security analysis of Protocol~\ref{prot:XOTMS}} 

We split this section up into the four cases, depending on whether or not the state is trusted at the beginning, and each cheating party. The cheating probabilities in Theorem \ref{th:ms-xot} are obtained by solving the SDPs obtained in each case. 

\paragraph{Cheating strategy formulation set-up.} 
Recall that Alice and Bob's ideal measurement operators in the magic square game are denoted by $M^\ms_{x_0x_1|a}$ and $N^\ms_{y_0y_1|b}$. The measurement corresponding to $a$ may be performed by applying a unitary that takes the $\{M^\ms_{x_0x_1|a}\}_{x_0x_1}$ basis to the computational basis, and then measuring in the computational basis. We denote this unitary by $U_a$ for each input $a$ to Alice. Similarly, the measurement corresponding to $b$ can be performed by applying $U_b$ and then measuring in the computational basis.

\subsubsection{Untrusted state, cheating Bob}    
  
In Protocol~\ref{prot:XOTMS}, Bob cheats successfully if he is able to output $(x_0, x_1)$ at the end of the protocol and Alice does not abort in step~\ref{step:alice-cheat}. 
The steps of Protocol~\ref{prot:XOTMS} that are relevant for determining whether Bob cheats successfully can then be described as follows:
\begin{enumerate}
\item Alice holds a state on two qubit registers $\calX_0\calX_1$.
\item \label{step:rho1} Bob sends two qubit registers $\calY_0\calY_1$ to Alice (encoding his claimed outcomes $(y_0,y_1)$). 
\item Alice prepares the state
\begin{equation} 
\ket{\phi_3}_{\calA\calA'} = \frac{1}{\sqrt{3}}(\ket{00}_{\calA\calA'} + \ket{11}_{\calA\calA'} + \ket{22}_{\calA\calA'}),
\label{eq:coherent}
\end{equation}
and sends $\calA'$ to Bob.
\item \label{step:rho2} Bob produces a guess $g_B$ for $(x_0,x_1)$
in the two-qubit register $\calG_B$. 
\item \label{step:cheatprob} The event of Bob cheating successfully is now determined by a joint measurement on the registers $\calA\calX_0\calY_0\calX_1\calY_1\calG_B$ (see Eq.~\eqref{eq:cheatmeas} below).
\end{enumerate}
We remark that in the actual protocol, the registers $\calA\calA'\calY_0\calY_1\calG_B$ are classical, but in the above description we have treated them as quantum. The cheating probability is unaffected by this change, because of the following observation: in principle, the classicality condition could be enforced by applying a pinching channel in the computational basis immediately after each of those registers is generated. Since no further operations are performed on these registers before the final measurement in step~\ref{step:cheatprob},\footnote{Strictly speaking, this is not true for $\calA'$, which Bob could operate on. Fortunately, the classicality condition for $\calA'$ can be enforced by applying a pinching channel to $\calA$ alone in the state~\eqref{eq:coherent}, without acting on $\calA'$. This pinching channel on $\calA$ can indeed be deferred as desired.}, we can defer these pinching channels to the point just before that measurement is performed. However, that measurement (Eq.~\eqref{eq:cheatmeas}) is diagonal in the computational basis of the registers $\calA\calY_0\calY_1\calG_B$, thus its outcome probabilities remain invariant regardless of whether those pinching channels were explicitly implemented beforehand. Hence those channels can be omitted entirely without changing the cheating probability.

Let $\rho_1$ and $\rho_2$ be Alice's states at the end of steps~\ref{step:rho1} and~\ref{step:rho2} respectively in the above description. These states are on registers $\calX_0\calY_0\calX_1\calY_1$ and $\calA\calX_0\calY_0\calX_1\calY_1\calG_B$ respectively. Since Alice does not perform any operations on the registers she holds between these steps, we must have
\begin{align}
\tr_{\calG_B}(\rho_2) = \tr_{\calA'}\left(\rho_1\otimes \kb{\phi_3}\right) = \rho_1 \otimes \frac{\Id_\calA}{3}.
\end{align}
The measurtement in step~\ref{step:cheatprob} can be described as follows. First, Alice performs a controlled unitary on registers $\calA\calX_0\calX_1$ given by
\begin{equation} 
U_{\calA\calX_0\calX_1} = \sum_{a \in \{0,1,2\}}\kb{a}_\calA\otimes (U_a)_{\calX_0\calX_1}.
\end{equation}
A computational-basis measurement is then performed on $\calA\calX_0\calY_0\calX_1\calY_1\calG_B$ to determine whether Bob cheats successfully. 
Letting $P_{\calY_0\calY_1\calA}$ denote the projector
\begin{align}
P_{\calY_0\calY_1\calA} & = \kb{10}_{\calY_0\calY_1}\otimes\kb{0}_{\calA} + \kb{01}_{\calY_0\calY_1}\otimes\kb{1}_{\calA} \nonumber \\
& \quad + \kb{00}_{\calY_0\calY_1}\otimes\kb{2}_{\calA} + \kb{11}_{\calY_0\calY_1}\otimes\Id_{\calA},
\end{align}
the projector  
corresponding to Bob cheating successfully is $\Id-\povm_{\rej}$, where
\begin{align}
\povm_{\rej} & = \kb{00}_{\calX_0\calX_1}\otimes \left(P_{\calY_0\calY_1\calA}\otimes\Id_{\calG_B} + (\Id_{\calY_0\calY_1\calA_1} - P_{\calY_0\calY_1\calA})\otimes\sum_{(x'_0,x'_1) \neq (0,0)}\kb{x'_0x'_1}_{\calG_B}\right) \nonumber \\
& \quad + \sum_{\substack{(x_0, x_1) \neq (0,0) \\ (x_0',x_1') \neq (x_0,x_1)}}\kb{x_0x_1}_{\calX_0\calX_1}\otimes\Id_{\calY_0\calY_1\calA}\otimes\kb{x_0'x_1'}_{\calG_B}.
\end{align}
The first term in $\povm_\rej$ corresponds to Alice's outputs being $x_0x_1=00$, in which case Bob may fail if he either fails to pass Alice's test correctly, or if his guess $x_0'x_1'$ is not equal to $x_0x_1$. The second term corresponds to Alice's output being anything other than $00$, in which case Alice does not conduct a test, and Bob only fails if he fails to guess $x_0x_1$ correctly. The projector $P_{\calY_0\calY_1\calA}$ has a term for all the cases in which Bob fails Alice's test when $x_0x_1=00$. For example, if $y_0y_1=01$ (and correspondingly $y_2=1$), then the test fails if $a=0$ in the $\calA$ register; if $y_0y_1=11$ (and correspondingly $y_2=1$), then the test always fails, since none of them can be equal to any $x_a$. $\Id_{\calY_0\calY_1\calA}-P_{\calY_0\calY_1\calA}$ then corresponds to Bob not failing the test, and this case he still fails overall when $x'_0x'_1\neq 00$, which is reflected in the operator on $\calG_B$ that $\Id_{\calY_0\calY_1\calA}-P_{\calY_0\calY_1\calA}$ is tensored with.

The cheating probability is thus given by $\tr((\Id - \povm_{\rej})U_{\calA\calX_0\calX_1}\rho_2U^\dagger_{\calA\calX_0\calX_1})$. To describe it using a single measurement, we can write this as $\tr(\povm'_\acc \rho_2)$, where 
\begin{equation}
\povm'_\acc = U^\dagger_{\calA\calX_0\calX_1}(\Id - \povm_{\rej})U_{\calA\calX_0\calX_1}. 
\label{eq:cheatmeas}
\end{equation}
Hence we conclude Bob's maximum cheating probability is bounded 
above by the following SDP:
\begin{center}
\begin{minipage}{2in}
\centerline{\underline{Primal problem}}\vspace{-5mm}
\begin{align*}
\text{maximize:}   & \quad  \tr(\povm'_\acc\rho_2) \\
\text{subject to:} & \quad \tr_{{\calG}_B}(\rho_2) = \rho_1\otimes \frac{\Id_\calA}{3} \\
                   & \quad \tr(\rho_1) = 1 \\
                   & \quad \rho_1 \in \Pos(\calX_0\calY_0\calX_1\calY_1) \\
                   & \quad \rho_2 \in \Pos(\calA\calX_0\calY_0\calX_1\calY_1\calG_B)
\end{align*}
\end{minipage}
  \hspace*{12mm}
\begin{minipage}{2in}
\centerline{\underline{Dual problem}}\vspace{-5mm}
\begin{align*}
\text{minimize:}   & \quad \eta \\
\text{subject to:} & \quad \frac{1}{3}\tr_\calA(Q) \preceq \eta\Id_{\calX_0\calY_0\calX_1\calY_1} \\
                   & \quad \povm'_\acc \preceq Q\otimes \Id_{{\calG}_B} \\
                   & \quad \eta \in \R \\
                   & \quad Q \in \Herm(\calA\calX_0\calY_0\calX_1\calY_1).
\end{align*}
\end{minipage}
\end{center}
For the purposes of numerically solving the primal, one can enforce that the registers $\calA\calY_0\calY_1\calG_B$ are classical (by applying pinching channels in the computational basis), since the objective function and constraints are unaffected by 
applying such a channel to $\rho_1,\rho_2$.
This allows dramatically faster solving of the SDP, since it eliminates many variables. In addition, the optimization variables can be assumed to have real-valued entries, since the constraints and objective function are described by real-valued coefficients. 

Solving the SDP numerically yields a value of $0.94096$. 

To show that this bound is tight, we shall argue that every feasible solution to the (primal) SDP corresponds to an attack Bob can physically perform. Consider any feasible states $\rho_1,\rho_2$ in the SDP, and let $\ket{\rho_1}_{\calX_0\calY_0\calX_1\calY_1\calR}, \ket{\rho_2}_{\calA\calX_0\calY_0\calX_1\calY_1\calG_B\calR'}$ be any purifications of these states onto some registers $\calR,\calR'$ respectively. The main constraint 
in the SDP implies that
\begin{align}
\tr_{\calG_B\calR'}(\kb{\rho_2}) = \tr_{\calA'\calR}\left(\kb{\rho_1}\otimes \kb{\phi_3}\right).
\end{align}
By Uhlmann's theorem, this implies (expanding the dimension of $\calR'$ if necessary) there exists an isometry $V:\calA'\calR \to \calG_B\calR'$ that sends $\ket{\rho_1}_{\calX_0\calY_0\calX_1\calY_1\calR} \otimes \ket{\phi_3}_{\calA \calA'}$ to $\ket{\rho_2}_{\calA\calX_0\calY_0\calX_1\calY_1\calG_B\calR'}$. Bob can hence implement the following attack: in the initial steps, Bob prepares the state $\ket{\rho_1}_{\calX_0\calY_0\calX_1\calY_1\calR}$ and keeps the purification $\calR$. Then after receiving $\calA'$ from Alice, he implements the isometry $V$ on $\calA'\calR$ to produce $\calG_B\calR'$. Discarding the purification $\calR'$, this produces the feasible state $\rho_2$ in the SDP, achieving the desired cheating probability. (To be precise, one should again incorporate appropriate pinching channels to fulfill the classicality requirements for a physical attack, but as argued above, these channels can indeed be implemented without changing the cheating probability.)

\subsubsection{Trusted state, cheating Bob}    
  
The fact that the state is trusted implies an extra constraint on $\rho_1$, namely $\tr_{\calY_0\calY_1}(\rho_1) = \frac{\Id_{\calX_0\calX_1}}{4}$. Note that this automatically takes care of the $\tr(\rho_1) = 1$ constraint. The rest of the argument applies in exactly the same way, so Bob's cheating probability in this case is bounded by the following SDP:
\begin{center}
\begin{minipage}{2in}
\centerline{\underline{Primal problem}}\vspace{-5mm}
\begin{align*}
\text{maximize:}   & \quad  \tr(\povm'_\acc\rho_2) \\
\text{subject to:} & \quad \tr_{{\calG}_B}(\rho_2) = \rho_1\otimes \frac{\Id_\calA}{3} \\
                   & \quad \tr_{\calY_0\calY_1}(\rho_1) = \frac{\Id_{\calX_0\calX_1}}{4} \\
                   & \quad \rho_1 \in \Pos(\calX_0\calY_0\calX_1\calY_1) \\
                   & \quad \rho_2 \in \Pos(\calA\calX_0\calY_0\calX_1\calY_1\calG_B)
\end{align*}
\end{minipage}
  \hspace*{12mm}
\begin{minipage}{2in}
\centerline{\underline{Dual problem}}\vspace{-5mm}
\begin{align*}
\text{minimize:}   & \quad \frac{1}{4} \tr\left(Q_2\right) \\
\text{subject to:} & \quad \frac{1}{3}\tr_\calA(Q_1) \preceq Q_2\otimes\Id_{\calY_0\calY_1} \\
                   & \quad \povm'_\acc \preceq Q_1\otimes \Id_{{\calG}_B} \\
                   & \quad Q_2 \in \Herm(\calX_0\calX_1) \\
                   & \quad Q_1 \in \Herm(\calA\calX_0\calY_0\calX_1\calY_1).
\end{align*}
\end{minipage}
\end{center} 

Solving this SDP numerically yields a value of $0.93628$. This value is a bit smaller than the case when the state is untrusted, as should be expected.

To see that this bound is tight, we follow the same argument as before, together with the fact that the new constraint completely captures the fact that the whole initial state is trusted (not just Alice's side). This is because the initial state is pure, and Bob is allowed to apply arbitrary operations on his side. This lets him produce any extension of Alice's side of the state, hence there is no loss of tightness by only imposing a constraint on Alice's side.

\subsubsection{Untrusted state, cheating Alice} 

\newcommand{\calYout}{\calZ} 
\newcommand{\calYp}{{\calY'}} 
In this section, for brevity we will use $\calY$ to denote the registers $\calY_0\calY_1$ together, and label it using a single index $y=(y_0,y_1)$. 
Alice sends some state in register $\calY$ (possibly while retaining a purification of the state in a register $\calYp$) to Bob, who
carries out the measurement $\{N^\ms_{y|b}\}_{y}$ on it, and sends the outcome $y$ back to Alice in the computational basis of a register $\calYout$.\footnote{In previous sections, we used the registers $\calY_0\calY_1$ themselves to encode the outcome, but the subsequent proof is easier to follow with a new register storing the outcome.} So honest Bob performs one of the three channels 
\begin{equation}
\Phi_b(\sigma) = \sum_{y}\tr(N^\ms_{y|b}\sigma)\kb{y}_{\calYout},
\label{eq:channsBob}
\end{equation}
each with probability $\frac{1}{3}$, and cheating Alice wants to learn which channel he has performed. 
In order to do this, Alice in general can start with a pure state $\kb{\nu}$ in $\calY\calYp$ of which she sends register $\calY$ to Bob, so the overall channel being performed on $\kb{\nu}$ is $\Phi_b\otimes\Id_{\calYp}$. Note that by considering the Schmidt decomposition, we can always assume $\calYp$ has the same dimension as $\calY$ (as long as we allow Alice to perform general POVMs later). After Bob has performed the channel, Alice can carry out a POVM measurement $\{\povm_b\}$ on $\calYout \calYp$ to find out which channel was performed, with the POVM operator $\povm_b$ corresponding to Alice guessing channel $b$. Conditioned on channel $\Phi_b$ being performed, the probability of Alice guessing $b$ (i.e.~being correct) is $\tr(\povm_b (\Phi_b\otimes \Id_{\calYp} (\kb{\nu}) ))$. Hence Alice wants to maximize over pure states $\kb{\nu}$ in $\calY \calYp$ and three-outcome measurements $\{\povm_b\}$ the following quantity:
\begin{equation}\label{alice_opt1}
\frac{1}{3}\sum_{b \in \{0,1,2\}} \tr\left(\povm_b (\Phi_b\otimes \Id_{\calYp} (\kb{\nu}) ) \right)
\end{equation}
and the maximal value thereof is her maximum cheating probability.

We claim that the maximum cheating probability is given by the optimal value of the following primal-dual SDP pair, which is a slight generalization of the SDP for distance between channels due to \cite{Wat09, GW07}:
\begin{center}
\begin{minipage}{2in}
\centerline{\underline{Primal problem}}\vspace{-5mm}
\begin{align*}
\text{maximize:}   & \quad \frac{1}{3} \sum_{b \in \{0,1,2\}} \tr(Q_b J(\Phi_b)) \\
\text{subject to:} & \quad \sum_{b \in \{0,1,2\}} Q_b = \Id_{\calZ}\otimes\sigma \\
                   & \quad \tr(\sigma) = 1 \\
                   & \quad \sigma \in \Pos(\calYp), \forall b \, Q_b \in \Pos(\calYout\calYp)
\end{align*}
\end{minipage}
  \hspace*{12mm}
\begin{minipage}{2in}
\centerline{\underline{Dual problem}}\vspace{-5mm}
\begin{align*}
\text{minimize:}   & \quad \lambda \\
\text{subject to:} & \quad \tr_{\calYout}(R) \preceq \lambda\Id_{\calYp} \\
                   & \quad \forall b \, \frac{1}{3}J(\Phi_b) \preceq R \\
                   & \quad \lambda \in \R, R \in \Herm(\calYout\calYp),
\end{align*}
\end{minipage}
\end{center}
where $J(\Phi_b)$ is the Choi-Jamio{\l}kowski matrix for $\Phi_b$, which is in $\Herm(\calYout\calYp)$ and given by 
\begin{equation} 
J(\Phi_b) = \Phi_b\otimes\Id_{\calYp} \left(\kb{\Omega}\right) = \sum_{y}\kb{y}_{\calYout}\otimes N^\ms_{y|b},
\end{equation}
where $\ket{\Omega} = \sum_{y} \ket{yy}_{\calY\calYp}$ is the (unnormalized) maximally entangled state between registers $\calY\calYp$, with respect to the computational basis of $\calY$.

To see that the optimization for Alice's discrimination probability is equivalent to the above SDP, note that Alice's initial state $\ket{\nu}_{\calY\calYp}$ can always be written in the form
\begin{equation} \ket{\nu}_{\calY\calYp} = \sum_{y,y'} \nu_{y y'}\ket{y}_\calY\ket{y'}_{\calYp}, \end{equation}
for some coefficients $\nu_{yy'}$. Define the matrix $K_\nu \in \rL(\calYp)$ as 
\begin{equation} 
K_\nu = \sum_{y,y'}\nu_{yy'}\ketbra{y'}{y}.
\label{eq:op-vec}
\end{equation}
(This is the operator-vector isomorphism, up to a transpose.)
It is easily verified by direct calculation that $\ket{\nu} = (\Id_\calY\otimes K_\nu) \ket{\Omega}$, which implies
\begin{equation} \Phi_b\otimes\Id_{\calYp}(\kb{\nu}) 
= (\Id_{\calYout}\otimes K_\nu)J(\Phi_b)(\Id_{\calYout}\otimes K_\nu^\dagger).\end{equation}
Therefore, the quantity Alice wants to maximize (Eq.~\eqref{alice_opt1}) can be written as
\begin{equation} \sum_b \tr(\povm_b (\Phi_b\otimes \Id_{\calYp}(\kb{\nu}))) = \sum_b \tr\left((\Id_{\calYout} \otimes K_\nu^\dagger) \povm_b(\Id_{\calYout} \otimes K_\nu)J(\Phi_b)\right). \end{equation}
Furthermore, note that Eq.~\eqref{eq:op-vec} describes a bijective mapping between normalized initial states $\ket{\nu}$ and operators $K_\nu\in \rL(\calYp)$ satisfying $\Vert K_\nu\Vert_2 = 1$. Hence Alice's optimal discrimination probability can be written as
\begin{equation}\sup_{\{Q'_b\}_b \in \calS'}\frac{1}{3}\sum_b \tr(Q'_b J(\Phi_b)),\end{equation}
where $\calS'$ is the set
\begin{equation} \calS' = \left\{\{(\Id_{\calYout}\otimes K_\nu^\dagger)\povm_b(\Id_{\calYout}\otimes K_\nu)\}_{b \in \{0,1,2\}} : \povm_b \in \Pos(\calYout \calYp), \sum_b \povm_b = \Id_{\calYout \calYp}, K_\nu \in \rL(\calYp), \Vert K_\nu\Vert_2 = 1\right\}.
\label{eq:setSp}
\end{equation}
In comparison, the primal problem stated above is
\begin{equation} \sup_{\{Q_b\}_b \in \calS}\frac{1}{3}\sum_b \tr(Q_b J(\Phi_b))\end{equation}
where $\calS$ is the set
\begin{equation} \calS = \left\{ \{Q_b\}_{b \in \{0,1,2\}} : Q_b \in \Pos(\calYout\calYp), \sum_b Q_b = \Id_{\calYout}\otimes \sigma, \sigma \in \Pos(\calYp), \tr(\sigma) = 1 \right\}.
\label{eq:setS}
\end{equation}

We shall show that $\calS'$ and $\calS$ are the same set (thus the optimizations are the same), by proving that every element of the former lies in the latter, and vice versa. First, every element of $\calS'$ is a tuple $\{(\Id_{\calYout}\otimes K_\nu^\dagger)\povm_b(\Id_{\calYout}\otimes K_\nu)\}_{b \in \{0,1,2\}}$ defined by some $E_b$ and $K_\nu$ satisfying the conditions in~\eqref{eq:setSp}. It is easily verified that this tuple satisfies the conditions in~\eqref{eq:setS}, with
\begin{equation} 
Q_b = (\Id_{\calYout}\otimes K_\nu^\dagger)\povm_b(\Id_{\calYout}\otimes K_\nu),
\qquad
\sigma = K_\nu^\dagger K_\nu, 
\end{equation}
and hence it lies in $\calS$. As for the other direction, every element of $\calS$ is a tuple $\{Q_b\}_{b \in \{0,1,2\}}$ defined by some $Q_b$ and $\sigma$ satisfying the conditions in~\eqref{eq:setS}. It can be verified that this tuple is of a form satisfying the conditions in~\eqref{eq:setSp}, with
\begin{equation} 
\povm_b = (\Id_{\calYout}\otimes \sigma^{-1/2})Q_b(\Id_{\calYout}\otimes\sigma^{-1/2}) + \frac{\Id_{\calYout} \otimes \Pi^{\perp}_{\sigma}}{3} ,
\qquad 
K_\nu = \sigma^{1/2},
\end{equation}
where $\sigma^{-1/2}$ denotes the pseudoinverse of $\sigma^{1/2}$, and $\Pi^{\perp}_{\sigma}$ denotes the projector onto the kernel of $\sigma$. Hence this tuple lies in $\calS'$, as desired.

Solving the SDP numerically yields a value of $0.87268$.  

This bound is tight because the cheating probability is given by exactly the channel discrimination task we described, and the SDP (by construction) yields exactly the maximum discrimination probability.

\subsubsection{Trusted state, cheating Alice} 

Since the initial state is trusted, this means Alice initially holds the $\calX_0\calX_1$ registers of the state $\ket{\Psi^\ms}$, while Bob performs the measurement corresponding to input $b$ on his registers $\calY_0 \calY_1$, then receives an outcome which he sends to Alice by encoding it in the computational basis of a register $\calYout$. The state Alice then holds is
\begin{equation}
\rho_b = \sum_{y_0 y_1} \tr_{\calY_0\calY_1}((\Id \otimes N^\ms_{y_0y_1|b})\kb{\Psi^\ms}(\Id \otimes N^\ms_{y_0y_1|b})) \otimes \kb{y_0 y_1}_\calYout.
\end{equation}
Alice has to distinguish these states (for the three possible values of $b$), and hence her cheating probability is given by the state discrimination SDP from Section~\ref{sect:old-proof}:
\begin{center}
\begin{minipage}{2in}
\centerline{\underline{Primal problem}}\vspace{-5mm}
\begin{align*}
\text{maximize:}   & \quad \frac{1}{3} \sum_{b \in \{0,1,2\}} \tr(\povm_b \rho_b) \\
\text{subject to:} & \quad \sum_{b \in \{0,1,2\}} \povm_b = \Id_{\calX_0\calX_1\calYout} \\
                   & \quad \forall b \, \povm_b \in \Pos(\calX_0\calX_1\calYout)
\end{align*}
\end{minipage}
  \hspace*{12mm}
\begin{minipage}{2in}
\centerline{\underline{Dual problem}}\vspace{-5mm}
\begin{align*}
\text{minimize:}   & \quad \tr(\sigma) \\
\text{subject to:} & \quad \forall b\,\, \frac{1}{3} \rho_b \preceq \sigma \\
                   & \quad \sigma \in \Pos(\calX_0\calX_1\calYout).
\end{align*}
\end{minipage}
\end{center}

Solving the SDP numerically yields a value of $0.83333$. 
This value is a bit smaller than the case when the state is untrusted, as should be expected.

Again, this bound is tight because the cheating probability is given by exactly the state discrimination task we described.

\section{Numerically bounding the cheating probabilities for Protocol~\ref{prot:di-xot}}
\label{sect:DINPA}

In this section, we obtain expressions for Alice and Bob's cheating probabilities in Protocol \ref{prot:di-xot}, which can be optimized using the NPA hierarchy. Throughout the section we shall often omit some tensor factors of $\Id$ for brevity (e.g.~for an operator $M$ on Alice's system and an operator $N$ on Bob's system, $MN$ implicitly means $(M \otimes \Id)(\Id \otimes N)$). 

Without loss of generality, we assume the state shared by the two parties is pure, by giving the dishonest party the purification of any mixed state.
We shall also henceforth assume that all measurements are projective: in the case of cheating Alice, our analysis involves a standard Bell-nonlocality scenario, in which case it is a known result (see Appendix~\ref{app:projs} for a brief discussion) that measurements can be assumed to be projective. For cheating Bob, however, some of the measurements are sequential and hence more care is needed, since it was found in~\cite{BMKG13} that there exist sequential correlations that \emph{cannot} be described by projective measurements. (It would be convenient if Protocol~\ref{prot:XOTMS} did not involve Alice sending her input to Bob, thereby more closely resembling a standard nonlocal game. However, it would seem this step was necessary to allow an honest Bob to know which bit of his output he should use, as previously mentioned regarding Protocol~\ref{prot:XOTMS}.) We prove that the measurements can also be assumed to be projective in the case of cheating Bob in Lemma~\ref{lemm:proj} in Appendix~\ref{app:projs}.
 
\paragraph{Cheating Bob.}
Bob never calls for a test if he is dishonest, since if he does so then he must either abort or perform a protocol perfectly secure against him. Hence we can consider only two families of measurement choices that dishonest Bob may perform --- one family that is used if Alice calls for a test, and one family that is used in the subroutine running Protocol~\ref{prot:XOTMS}. As noted previously, dishonest Bob may not be physically supplying the test input Alice provides to his box and reporting the corresponding output. However, it is still true that the process of him producing a reply $b^A$ to Alice's message $y^A$ can be modelled as a measurement process described by some operators acting on his share of the state, and it is these operators that we will be calling $N^\test_{y^A|b^A}$ in the proof. (An alternative perspective is that any operation dishonest Bob might perform on his share of the state might as well have been ``encoded'' in the box directly.) Still, we highlight that it is important here that $y^A$ was chosen by honest Alice rather than self-reported by dishonest Bob --- this ensures that it was indeed uniformly chosen.

Since Alice is honest, the possible measurements performed by her box are the same in both situations, although Bob could have preprogrammed what measurement is performed by Alice's box for a given input. We have used $a^A, b^A, x^A, y^A$ to denote the inputs and outputs when Alice runs the test; here we shall use $a, b, x, y$ to denote the variables when the Protocol~\ref{prot:XOTMS} subroutine is run. 

Let Alice and Bob's shared pure state be $\ket{\rho}$, and let $\rho = \kb{\rho}$. Let Alice's measurement operators on her system for input $a$ and output $x$ be denoted by ${\meas}_{x|a}$ and Bob's test measurement operators on his system for input $b^A$ and output $y^A$ be denoted by $N^\test_{y^A|b^A}$. 
The probability of outputs $x^A, y^A$ on inputs $a^A, b^A$ by Alice during the test is given by 
$\norm{\meas_{x^A|a^A}N^\test_{y^A|b^A}\ket{\rho}}_2 = \tr(\meas_{x^A|a^A}N^\test_{y^A|b^A}\rho)$,
since we assumed the measurements are projective. Bob's cheating probability when Alice runs the test is just his probability of passing the test, since he can guess $x$ perfectly if he passes the test. Since honest Alice picks the testing inputs $a^A, b^A$ uniformly, Bob's probability of passing the test is given by
\begin{equation}\label{eq:bob-test}
P^\test_B = 
\frac{1}{9}\sum_{a^A, b^A} \, \sum_{x^A, y^A : x^A_{b^A} = y^A_{a^A}}\tr(\meas_{x^A|a^A}N^\test_{y^A|b^A}\rho).
\end{equation}

Let Bob's measurement operators when Alice does not run the test be denoted by $N^\prot$. There are two kinds of measurement operators here --- Bob performs a measurement with no input to generate the $y$ that he sends to Alice in step~\ref{step:bob-y} of the Protocol~\ref{prot:XOTMS} subroutine; these measurement operators will be denoted by $N^\prot_y$. The subnormalized shared state after Bob performs this measurement is given by $N^\prot_y \rho N^\prot_y$. After receiving Alice's input $a$ as a message from Alice, Bob performs another measurement to produce his guess $g$ for Alice's output $x$. Since the only information he learns from Alice is $a$, he uses this as his input and applies a measurement operator $N^\prot_{g|a}$. The probability of Alice outputting $x$ on input $a$ and Bob outputting $g$ on input $a$ and sending $y$ at the beginning is given by 
$\norm{\meas_{x|a}N^\prot_{g|a}N^\prot_y \ket{\rho}}_2 = \tr(\meas_{x|a}N^\prot_{g|a}N^\prot_y\rho N^\prot_y)$.
Alice picks her input $a$ uniformly at random and Bob cheats successfully in this case if $y, a$ and $x$ are compatible (in the Protocol~\ref{prot:XOTMS} subroutine) and his guess satisfies $g = x$. Hence Bob's cheating probability is given by
\begin{align}\label{eq:bob-prot}
P^\prot_B = 
& \frac{1}{3}\tr\bigg(\meas_{00|0}N^\prot_{00|0}(N^\prot_{00}\rho N^\prot_{00} + N^\prot_{01}\rho N^\prot_{01})  + \meas_{00|1}N^\prot_{00|1}(N^\prot_{00}\rho N^\prot_{00} + N^\prot_{10}\rho N^\prot_{10}) \nonumber \\
& + \meas_{00|2}N^\prot_{00|2}(N^\prot_{01}\rho N^\prot_{01} + N^\prot_{10}\rho N^\prot_{10}) + \sum_{a}\sum_{x \neq 00}\sum_y \meas_{x|a}N^\prot_{x|a}N^\prot_y\rho N^\prot_y \bigg). 
\end{align}
Using the fact that $\meas_{x|a}$ and $N^\prot_y$ act on different systems, the terms in the above expression can be rewritten in the form $\tr(\meas_{x|a}N^\prot_y N^\prot_{x|a}N^\prot_y \rho)$, which are terms in the NPA matrix.
Bob's overall cheating probability is given by the maximum value (over all possible states and measurements as described above) of
\begin{equation}\label{eq:bob-DI} 
q^A P^\test_B + (1-q^A) P^\prot_B.
\end{equation}

\paragraph{Cheating Alice.} Here we similarly assume dishonest Alice has measurement operators ${\meas}^\prot$ that are applied in her box when the Protocol~\ref{prot:XOTMS} subroutine is run, and measurement operators ${\meas}^\test$ that are applied when Bob calls for the test. Bob's measurement operators $N_{y|b}$ are the same throughout. As in Bob's case, Alice's cheating probability when Bob runs the test is given by
\begin{equation} 
P^\test_A =\frac{1}{9}\sum_{a^A, b^A} \, \sum_{x^A, y^A : x^A_{b^A} = y^A_{a^A}}\tr(\meas^\test_{x^A|a^A}N_{y^A|b^A}\rho).
\end{equation}

The only information Alice gets from Bob in the Protocol~\ref{prot:XOTMS} subroutine is Bob's output $y$. She uses this to choose a measurement to perform on her state, which produces an output $g$ that is her guess for Bob's input. This corresponds to a measurement operator $\meas^\prot_{g|y}$, and the probability of Bob producing output $y$ on input $b$ and Alice producing output $g$ on message $y$ from Bob is given by $\tr(\meas^\prot_{g|y}N_{y|b}\rho)$. Since honest Bob chooses his input $b$ uniformly, and Alice cheats successfully if her guess satisfies $g=b$, her cheating probability in this case is given by
\begin{equation} 
P^\prot_A = \frac{1}{3}\sum_{b}\sum_y\tr(\meas^\prot_{b|y}N_{y|b}\rho).
\end{equation}
Alice's overall cheating probability is given by the maximum value of
\begin{equation}\label{eq:alice-DI} 
q^B P^\test_A + (1-q^B) P^\prot_A.
\end{equation}

\paragraph{Using the NPA hierarchy.} The bound on $\PA$ in Theorem~\ref{th:ms-dixot} was computed by using NPA local level 1 (see Appendix~\ref{NPA}) to optimize Eq.~\eqref{eq:alice-DI}.\footnote{In principle, it could have been simpler to compute a bound using only the no-signalling constraints, as was done in part of the security proof in~\cite{SCA+11}. However, here we encountered the issue that non-signalling strategies that perfectly win the magic square game are not unique, and can even produce deterministic outcomes for one input pair.}
For $\PB$, we need to optimize Eq.~\eqref{eq:bob-DI}, and higher-order moments are required due to the sequential measurements. Hence we choose the index set $\ops$ (see Appendix~\ref{NPA}) to be 
\begin{equation}
\ops = (\{\Id\} \cup \{M_{x|a}\}) \circ (\{\Id\} \cup \{N^\test_{y|b}\} \cup \{N^\prot_{g|m}\} \cup \{N^\prot_{y}\} \cup \{N^\prot_{g|m} N^\prot_{y}\} ),
\end{equation}
where $\circ$ means all pairwise word concatenations between the two sets in order (and $\Id$ is the empty word). However, this matrix is very large if measurement operators for all outcomes are included, and hence we omitted one 
projector
per measurement by using the relation $\sum_x M_{x|a} = \sum_y N_{y|b} = \Id$ 
to eliminate $M_{11|a}, N_{11|b}$ from the objective function. Omitting these terms from the index set $\ops$ as well, its size is reduced to a more manageable $|\ops| = (1+9)(1+9+9+3+27) = 490$.

The results from solving these SDPs numerically are stated in Theorem~\ref{th:ms-dixot}.

\section{
Analytically bounding the cheating probabilities away from $1$ for Protocol~\ref{prot:di-xot}}
\label{sect:continuity}

We shall show the security proof for cheating Bob only (the proof for cheating Alice is analogous, though simpler because there are no sequential measurements by Alice). We start by first proving the following lemma, 
which formalizes the intuition that in order to show $\PB$ is bounded away from $1$ in Protocol~\ref{prot:di-xot}, it would be sufficient to argue that if Bob passes Alice's test with high probability, then he cannot cheat perfectly in the ``protocol branch''.
\begin{lemma}\label{lemm:gapped}
Consider any $q^A \in (0,1)$. Suppose there exists some $\eps > 0$ such that the value of
\begin{equation*}
\begin{aligned}
\text{maximize:}   & \quad P^\prot_B \\
\text{subject to:} & \quad P^\test_B \geq 1-\eps
\end{aligned}
\end{equation*}
is upper-bounded by $1-\eps/(1-q^A)$, where $P^\prot_B, P^\test_B$ are defined as in Eqs.~\eqref{eq:bob-test}--\eqref{eq:bob-prot}, and the supremum is taken over all states and measurements. Then for Protocol~\ref{prot:di-xot} performed with that choice of $q^A$, we have 
\begin{equation*}
\PB \leq 1-q^A \eps.
\end{equation*}
\end{lemma}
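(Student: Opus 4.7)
The plan is a straightforward case analysis on the value of $P^\test_B$ achieved by Bob's optimal cheating strategy. Recall from Eq.~\eqref{eq:bob-DI} that Bob's overall cheating probability factors as $\PB = q^A P^\test_B + (1-q^A) P^\prot_B$, where the \emph{same} state and measurements must be used for both the test branch and the protocol branch (since Bob commits to a single strategy for his box before learning the coin outcome $c^A$). So the argument reduces to showing that Bob cannot simultaneously make both $P^\test_B$ and $P^\prot_B$ close to $1$.

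First I would consider the case where Bob's strategy satisfies $P^\test_B < 1-\eps$. Then I simply bound $P^\prot_B \leq 1$ trivially and compute
\begin{equation*}
\PB = q^A P^\test_B + (1-q^A) P^\prot_B < q^A(1-\eps) + (1-q^A) = 1 - q^A \eps,
\end{equation*}
which gives the desired bound directly.

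Second, I would consider the case $P^\test_B \geq 1-\eps$. Here the hypothesis of the lemma applies, yielding $P^\prot_B \leq 1 - \eps/(1-q^A)$. Bounding $P^\test_B \leq 1$ trivially now gives
\begin{equation*}
\PB \leq q^A \cdot 1 + (1-q^A)\left(1 - \frac{\eps}{1-q^A}\right) = 1 - \eps,
\end{equation*}
and since $q^A \in (0,1)$ we have $1-\eps \leq 1 - q^A \eps$, so the desired bound again holds.

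Taking the supremum over all of Bob's strategies and combining the two cases yields $\PB \leq 1 - q^A \eps$. There is no real obstacle here: the lemma is essentially a ``tradeoff'' observation that formalizes the intuition described in the paragraph immediately before its statement, and the slight asymmetry $\eps/(1-q^A)$ in the hypothesis is precisely calibrated so that the two cases give the same final bound (up to the slack $q^A \eps \leq \eps$). The substantive work — establishing the hypothesis via the robust self-testing bounds of Lemma~\ref{lemm:rigidity} — is deferred to subsequent lemmas and is not needed here.
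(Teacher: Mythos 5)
Your proof is correct and takes essentially the same approach as the paper's — the case split on whether $P^\test_B \geq 1-\eps$ is just the contrapositive restructuring of the paper's proof by contradiction, with the same trivial bounds $P^\prot_B \leq 1$ and $P^\test_B \leq 1$ applied in the two branches.
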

\begin{proof}
Suppose to the contrary that $\PB > 1-q^A \eps$. Then by Eq.~\eqref{eq:bob-DI}, this must be achieved by Bob implementing a strategy such that
\begin{equation}
P^\test_B  > \frac{1-q^A \eps - (1-q^A) P^\prot_B}{q^A} \geq \frac{1-q^A \eps - (1-q^A)(1)}{q^A} = 1-\eps.
\end{equation}
By hypothesis, this implies that $P^\prot_B \leq 1-\eps/(1-q^A)$. But this would imply
\begin{equation}
P^\xot_B \leq q^A(1) + (1-q^A) \left(1-\frac{\eps}{1-q^A}\right) = 1 - \eps < 1-q^A \eps,
\end{equation} 
contradicting the supposition that $\PB > 1-q^A \eps$.
\end{proof}

We now use the rigidity properties of the magic square game to prove (for any fixed choice of $q^A$) there indeed exists some $\eps > 0$ fulfilling the conditions of  
this lemma, 
and thus $\PB$ is bounded by some constant strictly less than 1. To do so, we first note that our device-dependent arguments directly yield the following lemma, which we shall use later. 
  
\begin{lemma}\label{lemm:BcheatDD}
Let $\ket{\Psi^\ms}$ and $M^\ms_{x|a}$ be the ideal state and measurements (for Alice) in the magic square game. 
If Alice and Bob share a state of the form $\ket{\rho} = \ket{\Psi^\ms}_{\calX_0\calX_1\calY_0\calY_1} \otimes \ket{\jk}_{\calJ_A\calJ_B}$, and Alice's measurements on $\calX_0\calX_1\calJ_A$ are of the form $M_{x|a} = M^\ms_{x|a} \otimes \Id_{\calJ_A}$, then for any measurements ${N}^\prot_{g|m},{N}^\prot_y$ by Bob on $\calY_0\calY_1\calJ_B$, we have
\begin{equation*}
P^\prot_B \leq 0.93628,
\label{eq:BcheatDD}
\end{equation*}
where $P^\prot_B$ is defined as in Eq.~\eqref{eq:bob-prot}. 
\end{lemma}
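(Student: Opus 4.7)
The plan is to derive this lemma as a direct corollary of the ``trusted state, cheating Bob'' SDP analysis for Protocol~\ref{prot:XOTMS} carried out in Section~\ref{sect:old-proof}, whose optimal value was shown to be $0.93628$. Concretely, I will argue that any strategy of the form hypothesized in the lemma induces a feasible point for that SDP achieving objective value exactly $P^\prot_B$, so optimality of the SDP immediately yields the claimed bound.

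To set this up, first observe that since $\ket{\Psi^\ms}$ is maximally entangled across the bipartition $\calX_0\calX_1 : \calY_0\calY_1$ and the junk state is in product with it,
\begin{equation*}
\tr_{\calY_0\calY_1 \calJ_A \calJ_B}(\kb{\rho}) = \Id_{\calX_0\calX_1}/4.
\end{equation*}
Because Bob's measurements $N^\prot_y, N^\prot_{g|m}$ act only on $\calY_0\calY_1\calJ_B$, this marginal is preserved throughout the protocol. Moreover, since Alice's measurements $M^\ms_{x|a} \otimes \Id_{\calJ_A}$ are independent of $\calJ_A$, tracing out the junk registers leaves her output statistics unchanged.

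With this in mind, I will explicitly construct feasible SDP variables from the lemma's setup. Let $\rho_1 \in \Pos(\calX_0\calY_0\calX_1\calY_1)$ be the state obtained by tracing out $\calJ_A \calJ_B$ after Bob applies $N^\prot_y$, with his classical outcome encoded in $\calY_0\calY_1$ as the computational basis state $\ket{y_0 y_1}$; let $\rho_2 \in \Pos(\calA\calX_0\calY_0\calX_1\calY_1\calG_B)$ be the analogous state after Alice sends $\calA'$ and Bob produces his guess in $\calG_B$ via $N^\prot_{g|a}$. The constraint $\tr_{\calY_0\calY_1}(\rho_1) = \Id_{\calX_0\calX_1}/4$ then holds by the observation above, and $\tr_{\calG_B}(\rho_2) = \rho_1 \otimes \Id_\calA/3$ holds because $\rho_2$ is obtained from $\rho_1$ by appending $\ket{\phi_3}$ and applying Bob's further trace-preserving operation on his half.

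The main obstacle is verifying that the SDP objective $\tr(\povm'_\acc \rho_2)$ evaluates to exactly the expression for $P^\prot_B$ in Eq.~\eqref{eq:bob-prot}. This reduces to checking that the SDP's accept projector --- defined as the complement of the abort projector for the test in step~\ref{step:alice-cheat} of Protocol~\ref{prot:XOTMS} conjoined with ``Bob's guess matches $(x_0,x_1)$'' --- captures exactly the joint event whose probability appears in $P^\prot_B$. Once this bookkeeping is verified, feasibility of the constructed $\rho_1, \rho_2$ together with SDP optimality yields $P^\prot_B \leq 0.93628$, establishing the lemma.
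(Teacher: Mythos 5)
Your proposal is correct and takes the same route as the paper: reduce the lemma's scenario to the trusted-state SDP of Section~\ref{sect:old-proof} (whose optimal value $0.93628$ then gives the bound), exploiting that Alice's measurements ignore $\calJ_A$ and that Bob's operations on his half are unconstrained. The paper's own proof does this as a brief observation that the trusted-state analysis already allows Bob to append arbitrary ancillas such as $\ket{\jk}_{\calJ_A\calJ_B}$ and measure jointly across them, whereas you make the reduction explicit by constructing feasible SDP variables $\rho_1,\rho_2$ --- an equivalent, if more laborious, unpacking of the same idea, with the acknowledged ``bookkeeping'' step being precisely what the paper sidesteps by directly invoking the fact that the Section~\ref{sect:old-proof} SDP is a valid upper bound on Bob's cheating probability for any strategy in the trusted-state setting.
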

\begin{proof}
Since Alice's measurements act as the identity on $\calJ_A$, we may as well suppose Bob holds that system. Recall that our device-dependent arguments allowed the cheating party to perform arbitrary operations on their state. In particular, cheating Bob could append an ancilla $\ket{\jk}_{\calJ_A\calJ_B}$ and perform a joint measurement across this ancilla and his part of the state. Hence our bounds for the trusted-state-and-measurements scenario apply to the situation described here.
\end{proof}
 
Consider an arbitrary $\eps>0$, and consider any cheating strategy for Bob in Protocol~\ref{prot:di-xot} that achieves $P^\test_B \geq 1-\eps$. Then by Lemma~\ref{lemm:rigidity}, we have 
\begin{equation} \label{eq:rigidA}
\norm{(V_A M_{x|a} \otimes V_B) \ket{\rho} - ((M^\ms_{x|a} \otimes \Id) \ket{\Psi^\ms})\otimes\ket{\jk}}_2 \leq O(\eps^{1/4}).
\end{equation}
Before proceeding further, we give an informal overview of our subsequent argument. 
Qualitatively, Eq.~\eqref{eq:rigidA} means the (subnormalized) true post-Alice-measurement states are close to the ideal post-Alice-measurement states of the magic square game, up to local isometries/ancillas. We shall hence argue that after Bob performs his true measurements $N^\prot_{g|m}, N^\prot_y$ on the true post-Alice-measurement states, the resulting states are close (up to local isometries/ancillas) to those that would result from Bob performing some {other} measurements $\tilde{N}^\prot_{g|m}, \tilde{N}^\prot_y$ on the ideal post-Alice-measurement states, upon which our desired result quickly follows by observing that the latter is precisely a situation where Lemma~\ref{lemm:BcheatDD} applies. Roughly speaking, these measurements $\tilde{N}^\prot_{g|m}, \tilde{N}^\prot_y$ will be constructed from the true measurements $N^\prot_{g|m}, N^\prot_y$ by simply ``inverting'' the isometries mapping the true states to the ideal ones, giving us measurements that act on the ideal systems rather than the true ones. However, since isometries are not in general invertible, we shall need to account for some technical details while doing so.\footnote{An alternate perspective that may be helpful is that an isometry simply defines an embedding of Hilbert spaces, i.e.~we can choose bases such that $V\ket{\phi} = \ket{\phi} \oplus \mathbf{0}$ in explicit components. We can thus think of Eq.~\eqref{eq:rigidA} as the statement $\norm{((M_{x|a} \otimes \Id) \ket{\rho}) \oplus \mathbf{0} - ((M^\ms_{x|a} \otimes \Id) \ket{\Psi^\ms})\otimes\ket{\jk}}_2 \leq O(\eps^{1/4})$ in some choice of coordinates, in which case the subsequent construction is fairly intuitive.}

We now give the rigorous proof. As previously described, we take the input and output spaces of all the true measurement operators to be the same. In that case, we can define measurement operators $\tilde{N}^\prot_{y}, \tilde{N}^\prot_{g|m}$ on $\calY_0 \calY_1 \calJ_B$ via
\begin{equation}\label{eq:paddedops}
\tilde{N}^\prot_{y}=V_B N^\prot_{y}V_B^\dagger + \frac{\Pi_{V_B}}{2} \quad \text{and} \quad \tilde{N}^\prot_{g|m}=V_B N^\prot_{g|m}V_B^\dagger + \frac{\Pi_{V_B}}{2}
\end{equation}
where $\Pi_{V_B} = \Id_{\calY_0\calY_1\calJ_B} - V_B V_B^\dagger$. It is easy to verify that $\Pi_{V_B}$ is a projector and satisfies $V_B^\dagger \Pi_{V_B} = 0$, $\Pi_{V_B}V_B = 0$. 
The operators $\tilde{N}^\prot_{y}$ form a valid set of measurement operators on $\calY_0 \calY_1 \calJ_B$, since
\begin{equation}
\tilde{N}^{\prot \, \dagger}_{y}\tilde{N}^\prot_{y} = \left(V_B N^{\prot \, \dagger}_{y}V_B^\dagger + \frac{\Pi_{V_B}}{2}\right) \left(V_B N^\prot_{y}V_B^\dagger + \frac{\Pi_{V_B}}{2}\right) = V_B N^{\prot \, \dagger}_{y}N^\prot_{y}V_B^\dagger + \frac{\Pi_{V_B}}{4},
\end{equation}
which implies they sum to the identity on $\calY_0 \calY_1 \calJ_B$. Similarly, for each $m$, the operators $\tilde{N}^\prot_{g|m}$ form a valid set of measurement operators.

The ``sequential operators'' $\tilde{N}^\prot_{g|m}\tilde{N}^\prot_y$ thus also form a valid set of measurement operators, i.e.~we have $\sum_{gy} (\tilde{N}^\prot_{g|m}\tilde{N}^\prot_y)^\dagger\tilde{N}^\prot_{g|m}\tilde{N}^\prot_y = \Id_{\calY_0 \calY_1 \calJ_B}$ (for all $m$). This implies that for any (not necessarily normalized) $\ket{\nu} \in \calX_0\calX_1\calY_0\calY_1\calJ_A\calJ_B$, we have $\norm{(\Id_{\calX_0 \calX_1 \calJ_A} \otimes \tilde{N}^\prot_{g|m}\tilde{N}^\prot_y)\ket{\nu}}_2 \leq \norm{\ket{\nu}}_2$. Additionally, we note that
\begin{equation}
\tilde{N}^\prot_{g|m}\tilde{N}^\prot_{y} V_B =
\left(V_B N^\prot_{g|m} V_B^\dagger + \frac{\Pi_{V_B}}{2}\right) \left(V_B N^\prot_{y}V_B^\dagger + \frac{\Pi_{V_B}}{2}\right) V_B = V_B N^\prot_{g|m}N^\prot_y.
\end{equation}
Applying these properties to Eq.~\eqref{eq:rigidA}, followed by the reverse triangle inequality, we finally obtain the following (denoting $M^\ms_{x|a}\otimes\Id_{\calJ_A} = \tilde{M}^\ms_{x|a}$):
\begin{align}
O(\eps^{1/4}) &\geq \norm{(V_A M_{x|a} \otimes \tilde{N}^\prot_{g|m}\tilde{N}^\prot_y V_B) \ket{\rho} - (\tilde{M}^\ms_{x|a} \otimes \tilde{N}^\prot_{g|m}\tilde{N}^\prot_y) \ket{\Psi^\ms}\otimes\ket{\jk}}_2 \nonumber \\
&=\norm{(V_A M_{x|a} \otimes V_B N^\prot_{g|m}N^\prot_y) \ket{\rho} - (\tilde{M}^\ms_{x|a} \otimes \tilde{N}^\prot_{g|m}\tilde{N}^\prot_y) \ket{\Psi^\ms}\otimes\ket{\jk}}_2 \nonumber \\
&\geq \left|\, \norm{(V_A M_{x|a} \otimes V_B N^\prot_{g|m}N^\prot_y) \ket{\rho}}_2 - \norm{(\tilde{M}^\ms_{x|a} \otimes \tilde{N}^\prot_{g|m}\tilde{N}^\prot_y) \ket{\Psi^\ms}\otimes\ket{\jk}}_2 \,\right| \nonumber \\
&= \left|\, \norm{(M_{x|a} \otimes N^\prot_{g|m}N^\prot_y) \ket{\rho}}_2 - \norm{(\tilde{M}^\ms_{x|a} \otimes \tilde{N}^\prot_{g|m}\tilde{N}^\prot_y) \ket{\Psi^\ms}\otimes\ket{\jk}}_2 \,\right|.
\label{eq:closeprobs}
\end{align}
Recall that $P^\prot_B$ is a sum of terms of the form $\norm{(M_{x|a} \otimes N^\prot_{g|m}N^\prot_y) \ket{\rho}}_2$. Let $\tilde{P}^\prot_B$ be the value obtained by replacing these terms with $\norm{(\tilde{M}^\ms_{x|a} \otimes \tilde{N}^\prot_{g|m}\tilde{N}^\prot_y) \ket{\Psi^\ms}\otimes\ket{\jk}}_2$ instead. Since this is the sum of finitely many terms, Eq.~\eqref{eq:closeprobs} implies that $P^\prot_B \leq \tilde{P}^\prot_B + O(\eps^{1/4})$. In addition, Lemma~\ref{lemm:BcheatDD} implies that $\tilde{P}^\prot_B \leq 0.93628$. Hence we conclude that by choosing sufficiently small $\eps > 0$, we would have $P^\prot_B \leq 0.93628+O(\eps^{1/4}) \leq 1-\eps/(1-q^A)$, which satisfies the conditions of Lemma~\ref{lemm:gapped}.

\begin{remark}
In principle, this approach can yield analytical bounds on $\PB$ by explicitly tracking the constants in the $O(\eps^{1/4})$ bound. However, the results of \cite{WBMS16} show that this $O(\eps^{1/4})$ self-testing bound becomes trivial at very small values of $\eps$ (less than $10^{-4}$, for our definition of $\eps$). 
We found that if the optimization in Lemma~\ref{lemm:gapped} is solved numerically using the NPA hierarchy instead of using this analytical bound, the results appear much more robust --- for instance, at $\eps=10^{-3}$ we still get a nontrivial bound of $P^\prot_B \leq 0.97365
$. Hence this indicates that the analytical approach here yields bounds that are quite far from tight. (Solving the analogous optimization for cheating Alice yields tighter bounds as compared to cheating Bob; e.g.~$P^\prot_A \leq 0.85855
$ at $\eps=10^{-3}$ and $P^\prot_A 
\leq 0.83590
$ at $\eps=10^{-5}$.)
\end{remark} 

\section{Discussion}

The protocol in this work is a somewhat ``minimal'' construction, in that it only involves a small number of rounds, and each check that Alice and/or Bob conducts appears critical for achieving any nontrivial bound on the cheating probabilities. However, a natural question to investigate in future work would be whether having more rounds of communication or larger entangled states could yield better protocols. For instance, our protocol yields different cheating probabilities for Alice and Bob, but it seems unclear whether it is simply due to the asymmetric nature of the task, or whether having further checks by each party could result in more symmetric bounds.
As for trying to achieve better security by using more states, there are various self-testing results for devices playing many parallel instances of nonlocal games~\cite{McKague16,McKague17,CN16,Col17}, although it is not immediately clear whether they are sufficiently robust to be converted into a protocol of this form. Another approach could be to design protocols based on sequential rounds of nonlocal games, such as in~\cite{AMPS16,KW16}. 

It may also be interesting to try converting the XOT protocol in this work into a protocol for OT, or other tasks such as bit commitment and coin flipping as mentioned in the introduction. As a starting point, it is fairly straightforward to design a conversion to OT that maintains perfect completeness (in the following, we use $(x'_0,x'_1)$ to denote Alice's final OT output and $(b',y')$ to denote Bob's): 
\begin{protocol}[Converting XOT to OT] 
\label{prot:BBOT} 
\hfill
\begin{enumerate}
\item Alice and Bob perform an XOT protocol so that Alice has $(x_0,x_1,x_2)$ and Bob has $(b,y=x_b)$.
\item Bob picks $c \in \{0,1,2\}$ uniformly at random such that $c \neq b$, and sends it to Alice. 
\item \begin{itemize}
\item If $c=0$, Alice outputs $(x_0',x_1') = (x_1, x_0\oplus x_1)$, and Bob outputs $(b',y')=(b+2\mod 3,y)$.
\item If $c=1$, Alice outputs $(x'_0,x'_1) = (x_0\oplus x_1, x_0)$, and Bob outputs $(b',y')=(b+1 \mod 3,y)$.
\item If $c=2$, Alice outputs $(x'_0,x'_1) = (x_0,x_1)$, and Bob outputs $(b',y')=(b,y)$.
\end{itemize}
\end{enumerate} 
\end{protocol}
\noindent However, analyzing the soundness of the resulting OT protocol (using the protocol in this work to achieve the XOT part) would require further developing an appropriate SDP and/or self-testing analysis --- it does not seem straightforward to directly apply the final cheating-probability bounds for our XOT protocol, as they do not impose particularly detailed constraints on the nature of the dishonest party's side-information. We leave this for future work.

\section*{Acknowledgements}

We thank Jean-Daniel Bancal, Andrea Coladangelo, L{\'{\i}}dia del Rio, Honghao Fu, Anand Natarajan, Christopher Portmann, Xingyao Wu and Vilasini Venkatesh for helpful discussions. We also thank the reviewers for their feedback and suggestions for improvement.

Research at Perimeter Institute is supported in part by the Government of Canada through the Department of Innovation, Science and Economic Development Canada and by the Province of Ontario through the Ministry of Economic Development, Job Creation and Trade.

Research at the Institute for Quantum Computing is supported by Innovation, Science and Economic Development (ISED) Canada. S. K. and E. T. acknowledge support from the NSERC Canada Discovery Grants Program, and S. K. from Fujitsu Labs America.

Part of this work was done when S. K. was at the Centre for Quantum Technologies, National University of Singapore, Singapore (partially funded by the Singapore Ministry of Education and the National Research Foundation, Prime Minister's Office, Singapore), and E. T. was at the Institute for Theoretical Physics, ETH Z\"{u}rich, Switzerland (funded by the Swiss National Science Foundation via the National Center for Competence in Research for Quantum Science and Technology (QSIT), the Air Force Office of Scientific Research (AFOSR) via grant FA9550-19-1-0202, and the QuantERA project eDICT).  

\bibliographystyle{alpha}
\bibliography{bib-KST-dixot} 

\appendix 

\section{Background: The NPA hierarchy}  
\label{NPA}

In this appendix, we briefly explain the NPA hierarchy as used in Section~\ref{sect:DINPA}.
The NPA hierarchy~\cite{NPA08} is a method to characterize the set of \emph{quantum behaviours}, which are defined to be distributions that can be written as
\begin{equation}\label{eq:Cqc}
\Pr(xy|ab) = \bra{\nu} M_{x|a} N_{y|b} \ket{\nu} , 
\end{equation}
where $\ket{\nu}$ is a state in a Hilbert space $\calH$ and $\{M_{x|a}\}_{x\in\X,a\in\A}, \{N_{y|b}\}_{y\in\Y,b\in\B}$ are sets of operators on $\calH$ satisfying the following properties:
\begin{equation}\label{eq:commPVMs}
\begin{aligned}
&M_{x|a}^\dagger = M_{x|a} \text{ and } N_{y|b}^\dagger = N_{y|b}, \\
&M_{x|a} M_{x'|a} = \delta_{xx'} M_{x|a} \text{ and } N_{y|b} N_{y'|b} = \delta_{yy'} N_{y|b}, \\
&\sum_{x\in\X} M_{x|a} = \Id \text{ and } \sum_{y\in\Y} N_{y|b} = \Id \quad\forall a,b, \\
&[M_{x|a}, N_{y|b}] = 0 \quad\forall x,y,a,b.
\end{aligned}
\end{equation}
Such distributions can be produced, for instance, by projective measurements on the subsystems of a tensor product of two Hilbert spaces\footnote{While every distribution achievable on such a tensor-product space can easily be expressed in the form~\eqref{eq:Cqc}, the converse is not necessarily true. However, the former property implies the NPA hierarchy characterizes a superset of the distributions we consider, which is sufficient for our goal of upper bounding the cheating probabilities --- though the bounds may not be tight.}, as we are studying in this work. 
The NPA hierarchy consists of an infinite sequence of SDP conditions on $\Pr(xy|ab)$ which are all satisfied if and only if it is a quantum behaviour. 

Specifically, let $\ops$ denote any finite list of finite-length words formed by letters from the alphabet $\{M_{x|a}\}_{x\in\X,a\in\A} \cup \{N_{y|b}\}_{y\in\Y,b\in\B}$, 
and let $\ops_j$ denote the $j^\text{th}$ element of $\ops$. It is not hard to show that if we have a state and operators satisfying the conditions in Eq.~\eqref{eq:commPVMs}, then any $|\ops| \times |\ops|$ matrix $\Gamma$ with entries given by
\begin{equation}
\Gamma_{jk} = \bra{\nu} \ops_j^\dagger \ops_k \ket{\nu},
\end{equation}
interpreting the words as the corresponding operators on $\calH$, must be positive semidefinite. In particular, if $\Pr(xy|ab)$ is a quantum behaviour, then some of the entries of $\Gamma$ correspond to values of $\Pr(xy|ab)$, hence imposing an SDP constraint on this distribution. Additional constraints on $\Gamma$ can be imposed based on the relations in Eq.~\eqref{eq:commPVMs}. 

We hence see that any choice of $\ops$ yields an SDP characterization of a superset of the set of quantum behaviours (this is sufficient to yield upper bounds on the cheating probabilities we consider in this work). A sequence of increasingly tight bounds can be obtained via appropriate choices of $\ops$. Specifically, if $\ops$ is chosen to be the set of all words of length at most $k$, this will be referred to as \emph{global level $k$} of the NPA hierarchy. It was shown in~\cite{NPA08} that if a distribution $\Pr(xy|ab)$ satisfies the SDP constraint at global level $k$ for all $k$, then it must be a quantum behaviour. Hence the hierarchy describes a sequence of increasingly smaller sets that eventually converge to the set of quantum behaviours. 

As a slight variation, one can also consider \emph{local level $k$}, which refers to taking $\ops$ to be the set of all words with at most $k$ letters from $\{M_{x|a}\}_{x\in\X,a\in\A}$ and at most $k$ letters from $\{N_{y|b}\}_{y\in\Y,b\in\B}$. (Basically, local level $k$ defines a slightly smaller $\ops$ than global level $2k$.) Our results are based on local level 1 of the hierarchy or slight extensions thereof (see Section~\ref{sect:DINPA}). In addition, in our work we need to consider sequential measurements, which are not directly covered by the above description. However, as we discuss in Section~\ref{sect:DINPA}, the probabilities produced by such measurements can also be expressed as elements of the matrix $\Gamma$ and can be assumed projective, hence we can bound them using this approach. 

\section{Proof of Lemma~\ref{lemm:rigidity}} 
\label{sect:rigid-proof}

In this appendix we prove the rigidity lemma needed for the proof of our fully device-independent protocol. 
Again, in this section we often leave some tensor factors of $\Id$ implicit. 

Following \cite{CN16}, we define the following operators from the ideal measurements of the magic square game
\begin{equation}
\begin{gathered}
X^\ms_1 = \sum_{x_0x_1}(-1)^{x_1}M^\ms_{x_0x_1|1}, \quad \quad X^\ms_2 = \sum_{x_0x_1}(-1)^{x_0}M^\ms_{x_0x_1|1}, \\
Z^\ms_1 = \sum_{x_0x_1}(-1)^{x_0}M^\ms_{x_0x_1|0}, \quad \quad Z^\ms_2 = \sum_{x_0x_1}(-1)^{x_1}M^\ms_{x_0x_1|0}, \\
X^\ms_3 = \sum_{y_0y_1}(-1)^{y_1}N^\ms_{y_0y_1|1}, \quad \quad X^\ms_4 = \sum_{y_0y_1}(-1)^{y_1}N^\ms_{y_0y_1|0}, \\
Z^\ms_3 = \sum_{y_0y_1}(-1)^{y_0}N^\ms_{y_0y_1|0}, \quad \quad Z^\ms_4 = \sum_{y_0y_1}(-1)^{y_0}N^\ms_{y_0y_1|1}, \\
W^\ms_1 = \sum_{x_0x_1}(-1)^{x_0}M^\ms_{x_0x_1|2}, \quad \quad W^\ms_2 = \sum_{x_0x_1}(-1)^{x_1}M^\ms_{x_0x_1|2}, \\
W^\ms_3 = \sum_{y_0y_1}(-1)^{y_0}N^\ms_{y_0y_1|2}, \quad \quad W^\ms_4 = \sum_{y_0y_1}(-1)^{y_1}N^\ms_{y_0y_1|2}.
\end{gathered}
\end{equation}
$X_k, Z_k, W_k$ are defined similarly from the actual measurement operators $M_{x_0x_1|a}, N_{y_0y_1|b}$ (assumed projective, as declared in the Lemma~\ref{lemm:rigidity} conditions) used by Alice and Bob.

Both \cite{WBMS16} and \cite{CN16} use the following lemma due to McKague, which we shall also need.

\begin{lemma}[\cite{McKague16} Lemma~6]\label{lemm:mckg}
Suppose $\ket{\rho}$ is a state on $\calX\calY$ and $\{X_1, X_2\}$, $\{Z_1, Z_2\}$ are sets of Hermitian, unitary, commuting operators acting only on $\calX$, and $\{X_3, X_4\}$, $\{Z_3, Z_4\}$ are sets of Hermitian, unitary, commuting operators acting only on $\calY$, such that for all $k\neq l$,
\begin{align*}
& \norm{(X_kZ_l-Z_lX_k)\ket{\rho}}_2 \leq \delta \\
& \norm{(X_k-Z_{k+2})\ket{\rho}}_2 \leq \delta \\
& \norm{(X_kZ_k + Z_kX_k)\ket{\rho}}_2 \leq \delta.
\end{align*}
For $p, q \in \{0,1\}^4$, let $X^p = \prod_{i=1}^4X_i^{p_i}$ 
and let $Z^q, (X^\ms)^p, (Z^\ms)^q$ be defined similarly. Then, there exist isometries $V_A : \calX \to \calX_0\calX_1\calJ_A$ and $V_B : \calY \to \calY_0\calY_1\calJ_B$ such that for any $p, q \in \{0,1\}^4$,
\[ \norm{(V_A\otimes V_B) (X^pZ^q)\ket{\rho} - (X^\ms)^p (Z^\ms)^q\ket{\Psi^\ms}\otimes\ket{\jk}}_2 \leq O(\sqrt{\delta}).\]
\end{lemma}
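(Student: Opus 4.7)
The plan is to follow McKague's SWAP-isometry paradigm, extended to extract two virtual qubits per party rather than one. The intuition is that each approximately anticommuting pair $(X_k, Z_k)$ on the state $\ket{\rho}$ can be moved onto a fresh ancilla qubit by a local ``SWAP-like'' isometry, after which $X_k$ and $Z_k$ act as genuine Pauli $\sigma_X, \sigma_Z$ on that ancilla. Applied on both sides, the cross-party coupling $X_k \approx Z_{k+2}$ then forces the four-ancilla state into the tensor-product Bell form $\ket{\Psi^\ms}$, with everything else decoupling as a junk state on $\calJ_A\calJ_B$ (identified with the original registers $\calX, \calY$).

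Concretely, introduce fresh ancilla qubits $\calX_0, \calX_1$ on Alice's side and $\calY_0, \calY_1$ on Bob's side, all initialized to $\ket{0}$. For each $i \in \{1,2\}$, define a SWAP gadget $V_A^{(i)}$ on $\calX\calX_{i-1}$ in the standard form (Hadamard on the ancilla, controlled-$Z_i$ from the ancilla, Hadamard, controlled-$X_i$), and set $V_A = V_A^{(2)}V_A^{(1)}$; define $V_B$ analogously from $(X_3, Z_3)$ and $(X_4, Z_4)$. A direct expansion using $X_k^2 = Z_k^2 = \Id$ together with the state-dependent anticommutation bound $\|\{X_1, Z_1\}\ket{\rho}\|_2 \le \delta$ gives the core identity
\[
\bigl\|V_A^{(1)}(X_1\ket{\rho}) - (\sigma_X)_{\calX_0}\, V_A^{(1)}\ket{\rho}\bigr\|_2 \le O(\sqrt\delta),
\]
and analogously for $Z_1$ (mapping to $\sigma_Z$ on $\calX_0$). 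This is the CHSH-style rigidity calculation; the square root arises from converting state-inner-product bounds to 2-norm bounds.

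Applying $V_A^{(2)}$ on top may interfere with the structure built by $V_A^{(1)}$. To control this, one uses the exact commutations $[X_1, X_2] = [Z_1, Z_2] = 0$ together with the approximate cross-commutations $\|[X_1, Z_2]\ket{\rho}\|_2, \|[X_2, Z_1]\ket{\rho}\|_2 \le \delta$ to commute $X_2, Z_2$ past $X_1, Z_1$, incurring $O(\delta)$ state-norm error per commutator. The upshot is that after $V_A$, each of $X_1, X_2, Z_1, Z_2$ acts as the corresponding Pauli on $\calX_0$ or $\calX_1$ up to cumulative $O(\sqrt\delta)$ error. Bob's side is handled identically with $V_B$, and since Alice's and Bob's operators act on disjoint factors, $V_A \otimes V_B$ introduces no cross-interference. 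To finish, feed in the coupling hypothesis $\|(X_k - Z_{k+2})\ket{\rho}\|_2 \le \delta$: translated through the isometries it becomes four approximate ``stabilizer'' relations between Alice's and Bob's virtual Paulis on the ancillas. Two commuting such relations per Bell pair force the restricted ancilla state on $(\calX_{i-1}, \calY_{i-1})$ to be $O(\sqrt\delta)$-close in 2-norm to the unique simultaneous eigenstate, which -- up to a fixed single-qubit Clifford absorbed into $V_B$ -- is a Bell pair. Taking the tensor product over $i=1,2$ identifies the ideal state as $\ket{\Psi^\ms}$ on $\calX_0\calX_1\calY_0\calY_1$, and the remainder decouples as $\ket{\jk}$. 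For a general $X^p Z^q$, since each $X_k, Z_k$ acts on a distinct ancilla qubit post-isometry, the product factors as a tensor product of single-qubit Paulis; summing the at most eight contributions keeps the overall error at $O(\sqrt\delta)$.

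The main obstacle is the bookkeeping in the interleaving step, where one must commute $X_2, Z_2$ past the $(X_1, Z_1)$-SWAP structure using only state-dependent commutators. Each approximate commutator must be invoked a bounded number of times and in the right order so that the final error stays $O(\sqrt\delta)$ rather than compounding into something like $O(\delta^{1/4})$ or worse; this is handled by repeatedly using $\|MN\ket{\rho}\|_2 \le \|N\ket{\rho}\|_2$ for a unitary $M$ to peel finite operator products off without amplification. The second subtlety is the Bell-pair identification step: with four approximate commuting stabilizers acting on four ancilla qubits, one argues via successive approximate projections onto stabilizer eigenspaces, each costing $O(\sqrt\delta)$, with the fixed Clifford identifying the simultaneous eigenstate with the particular $\ket{\Psi^\ms}$ in the statement.
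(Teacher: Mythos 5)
First, note that the paper does not prove this lemma at all: it is imported from McKague~\cite{McKague17} (as the citation in the lemma header indicates), so there is no in-paper argument to compare against. Your sketch follows the same route as McKague's actual proof: build a two-ancilla SWAP isometry per party from the approximately anticommuting pairs, show that each $X_k,Z_k$ acts as the corresponding Pauli on its ancilla up to small error, and use the cross-party relations $X_k\approx Z_{k+2}$ to pin the ancilla state to $\ket{\Psi^\ms}$ tensored with a junk state; the $O(\sqrt{\delta})$ does indeed enter when converting inner-product (fidelity-type) estimates of the form $1-O(\delta)$ into 2-norm statements. So the architecture is the right one.

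The one substantive issue is in the step you yourself flag as the main obstacle. The hypotheses control $\norm{[X_k,Z_l]\ket{\rho}}_2$ and $\norm{(X_kZ_k+Z_kX_k)\ket{\rho}}_2$ only on $\ket{\rho}$ itself, whereas the interleaving analysis of the second SWAP gadget requires bounds of the form $\norm{[X_2,Z_1]\,W\ket{\rho}}_2$ for nontrivial words $W$ in the operators, since the state in the middle of the circuit is no longer $\ket{\rho}$. Your proposed fix --- peeling off unitaries via $\norm{MN\ket{\rho}}_2\le\norm{N\ket{\rho}}_2$ --- only removes operators standing to the \emph{left} of the commutator and does not supply these bounds. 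The mechanism that does supply them, and which in McKague's proof is used pervasively rather than only at the final Bell-pair identification, is the transfer trick: replace, say, $X_1\ket{\rho}$ by $Z_3\ket{\rho}$ at cost $O(\delta)$, so that the offending trailing operator now lives on the other tensor factor and commutes \emph{exactly} with everything on Alice's side, after which the hypothesis can be applied directly to $\ket{\rho}$. With that ingredient made explicit, the constant number of substitutions keeps the total error at $O(\sqrt{\delta})$ and the argument closes; without it, the claim that $X_2,Z_2$ can be commuted past the $(X_1,Z_1)$ gadget is not justified by the stated hypotheses.
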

Suppose the state $\ket{\rho}$ and measurements $M_{x|a}, N_{y|b}$ win the magic square game with probability $1-\eps$. Let $p_{ab}$ denote the probability that this strategy wins when Alice and Bob's inputs are $a$ and $b$. Since the inputs are uniformly distributed, this means that $\frac{1}{9}\sum_{a,b}p_{ab} = 1-\eps$. Since every probability is upper bounded by 1, for any fixed $a, b$ we have,
\begin{equation} p_{ab} = 9(1 - \eps) - \sum_{(a', b') \neq (a, b)}p_{a'b'} \geq 9(1- \eps) - 8 \geq 1 - 9\eps.\end{equation}
This implies that
\begin{equation}\label{eq:x1x3}
\langle \nu |X_1X_3|\nu\rangle 
= \sum_{x_0x_1y_0y_1} (-1)^{x_1+y_1} \langle \nu |M_{x_0x_1|1} N_{y_0y_1|1}|\nu\rangle 
= p_{11} - (1-p_{11})
\geq 1 - 18\eps.
\end{equation}
Similarly, using the lower bound on $p_{ab}$ for other values of $a, b$ we get
\begin{align}
\langle \rho |X_2X_4|\rho\rangle & \geq 1 - 18\eps \label{eq:x2x4}\\
\langle \rho |X_1X_2W_4|\rho\rangle & \geq 1 - 18\eps \label{eq:xw4} \\
\langle \rho |Z_1Z_3|\rho\rangle & \geq 1 - 18\eps \\
\langle \rho |Z_2Z_4|\rho\rangle & \geq 1 - 18\eps \\
\langle \rho |Z_1Z_2W_3|\rho\rangle & \geq 1 - 18\eps \label{eq:zw3} \\
-\langle \rho |W_1X_4Z_3|\rho\rangle & \geq 1- 18\eps \label{eq:w1zx} \\
-\langle \rho |W_2X_3Z_4|\rho\rangle & \geq 1 - 18\eps \label{eq:w2zx} \\
-\langle \rho |W_1W_2W_3W_4|\rho\rangle & \geq 1 - 18\eps. \label{eq:w-s}
\end{align}
Using these relations, we can show that $\norm{(X_1Z_1 + Z_1X_1)\ket{\rho}}_2 \leq O(\sqrt{\eps})$, following \cite{WBMS16,CN16}. 
(The argument basically relies on a sequence of bounds obtained in the same way as Eq.~\eqref{eq:normtrick} below.)
Similarly, we can show that all the conditions for Lemma~\ref{lemm:mckg} hold for the $X_k, Z_k$ operators, and hence the conclusion of the lemma holds with a bound of $O(\eps^{1/4})$. 

Now, consider the cases $p = 1000, 0100, 1100$ with $q$ being $0000$ in all cases in the statement of the lemma. Using the definitions of $X_1$ and $X_2$ we have,
\begin{alignat}{3}
& \Vert(V_A\otimes V_B)((M_{00|1} - M_{01|1} + M_{10|1} - M_{11|1})\otimes\Id)\ket{\rho} && \nonumber \\
& - ((M^\ms_{00|1} - M^\ms_{01|1} + M^\ms_{10|1} - M^\ms_{11|1})\otimes\Id)\ket{\Psi^\ms}\otimes\ket{\jk}\!\Vert_2 && \leq O(\eps^{1/4}) \label{eq:x1} \\
& \Vert(V_A\otimes V_B)((M_{00|1} + M_{01|1} - M_{10|1} - M_{11|1})\otimes\Id)\ket{\rho} && \nonumber \\
& - ((M^\ms_{00|1} + M^\ms_{01|1} - M^\ms_{10|1} - M^\ms_{11|1})\otimes\Id)\ket{\Psi^\ms}\otimes\ket{\jk}\!\Vert_2 && \leq O(\eps^{1/4}) \\
& \Vert(V_A\otimes V_B)((M_{00|1} - M_{01|1} - M_{10|1} + M_{11|1})\otimes\Id)\ket{\rho} && \nonumber \\
& - ((M^\ms_{00|1} - M^\ms_{01|1} - M^\ms_{10|1} + M^\ms_{11|1})\otimes\Id)\ket{\Psi^\ms}\otimes\ket{\jk}\!\Vert_2 && \leq O(\eps^{1/4}) \label{eq:x1x2}
\end{alignat}
where in the last inequality we have simplified $X_1X_2$ using the fact that the $M_{x_0x_1|1}, M^\ms_{x_0x_1|1}$ operators are orthogonal projectors. Moreover, both the $M_{x_0x_1|1}$ and $M^\ms_{x_0x_1|1}$ operators must sum to the identity on their respective spaces, which implies (via the $p = q = 0000$ case) that
\begin{alignat}{3}
& \Vert(V_A\otimes V_B)((M_{00|1} + M_{01|1} + M_{10|1} + M_{11|1})\otimes\Id)\ket{\rho} && \nonumber \\
& - ((M^\ms_{00|1} + M^\ms_{01|1} + M^\ms_{10|1} + M^\ms_{11|1})\otimes\Id)\ket{\Psi^\ms}\otimes\ket{\jk}\!\Vert_2 && \leq O(\eps^{1/4}). \label{eq:x_id}
\end{alignat}
Eqs.~\eqref{eq:x1}--\eqref{eq:x_id} involve linearly independent combinations of the $M_{x_0x_1|1}$ operators (and the corresponding ones for $M^\ms_{x_0x_1|1}$). Hence from linear combinations of these we can self-test the $M_{x_0x_1|1}$ operators, by applying the triangle inequality.

Similarly, $M^\ms_{x_0x_1|0}$ measurement can be self-tested by taking $p=0000$ with $q=1000, 0100, 1100$, and using the fact that $M^\ms_{x_0x_1|0}$ sum to identity. For self-testing $N^\ms_{y_0y_1|1}$ we take the equations for $(p=0010,q=0000), (p=0000,q=0001), (p=0010,q=0001)$, with the identity condition. And finally for $N^\ms_{y_0y_1|0}$ we take $(p=0001,q=0000), (p=0000, q=0010), (p=0001,q=0010)$ with the identity condition.

To self-test the $N_{y_0y_1|2}$ operators, note that from Eq.~\eqref{eq:xw4},
\begin{align}
\norm{(W_4 - X_1X_2)\ket{\rho}}_2^2 & = \langle\rho|W_4^2|\rho\rangle + \langle\rho|(X_1X_2)^2|\rho\rangle - \langle\rho|W_4X_1X_2|\rho\rangle - \langle\rho|X_1X_2W_4|\rho\rangle \nonumber \\ 
& = 2 - 2\langle\rho|X_1X_2W_4|\rho\rangle \leq 2 - 2(1-18\eps) = 36\eps,
\label{eq:normtrick}
\end{align}
where we have used the fact that $W_4$ and $X_1X_2$ square to identity by construction, and they also commute due to acting on only Alice and Bob's registers respectively. Using Eq.~\eqref{eq:zw3} similarly we have,
\begin{align}
\norm{(W_3 - Z_1Z_2)\ket{\rho}}_2 & \leq O(\sqrt{\eps}) \label{eq:w3z-norm}\\
\norm{(W_4 - X_1X_2)\ket{\rho}}_2 & \leq O(\sqrt{\eps}). \label{eq:w4x-norm}
\end{align}
Now, using the fact that $\norm{W_3}_\infty, \norm{X_1X_2}_\infty = 1$, and $W_3, W_4$ commute with $X_1, X_2, Z_1, Z_2$,
\begin{align}
\norm{(W_3W_4 - X_1X_2Z_1Z_2)\ket{\rho}}_2 & \leq \norm{W_3(W_4 - X_1X_2)\ket{\rho}}_2 + \norm{(W_3X_1X_2 - X_1X_2Z_1Z_2)\ket{\rho}}_2 \nonumber \\
 & \leq \norm{(W_4 - X_1X_2)\ket{\rho}}_2 + \norm{X_1X_2(W_3 - Z_1Z_2)\ket{\rho}}_2 \nonumber \\
 & \leq \norm{(W_4 - X_1X_2)\ket{\rho}}_2 + \norm{(W_3 - Z_1Z_2)\ket{\rho}}_2 \nonumber \\
 & \leq O(\sqrt{\eps}). \label{eq:w3w4}
\end{align}
We have already seen the self-testing of the $X_1X_2$ and $Z_1Z_2$ operators. Taking $(p=1100,q=1100)$ in Lemma~\ref{lemm:mckg} lets us self-test $X_1X_2Z_1Z_2$. Moreover, since the ideal magic square state and measurements win the magic square game perfectly, they must satisfy analogous relations with $\eps=0$, i.e.~we have
\begin{equation}
\begin{gathered}
Z^\ms_1Z^\ms_2\ket{\Psi^\ms} = W^\ms_3\ket{\Psi^\ms}, \quad \quad X^\ms_1X^\ms_2\ket{\Psi^\ms} = W^\ms_4\ket{\Psi^\ms}, \\
X^\ms_1X^\ms_2Z^\ms_1Z^\ms_2\ket{\Psi^\ms} = W^\ms_3W^\ms_4\ket{\Psi^\ms}.
\end{gathered}
\end{equation}
Using these, Eqs.~\eqref{eq:w3z-norm}--\eqref{eq:w3w4} and the fact that the $N_{y_0y_1|2}$ operators add to identity lets us self-test this measurement.

Finally, to self-test the $M_{x_0x_1|2}$ operators, we note that using Eqs.~\eqref{eq:w1zx}--\eqref{eq:w-s} we get,
\begin{align}
\norm{(W_1 + X_4Z_3)\ket{\rho}}_2 & \leq O(\sqrt{\eps}) \label{eq:w1xz-norm} \\
\norm{(W_2 + X_3Z_4)\ket{\rho}}_2 & \leq O(\sqrt{\eps}) \\
\norm{(W_1W_2 + W_3W_4)\ket{\rho}}_2 & \leq O(\sqrt{\eps}). \label{eq:w1w2}
\end{align}
The ideal magic square state and measurements satisfy
\begin{equation}
\begin{gathered}
X^\ms_4Z^\ms_3\ket{\Psi^\ms} = -W^\ms_1\ket{\Psi^\ms} \quad X^\ms_3Z^\ms_4\ket{\Psi^\ms} = -W^\ms_2\ket{\Psi^\ms} \\
W^\ms_3W^\ms_4\ket{\Psi^\ms} = - W^\ms_1W^\ms_2\ket{\Psi^\ms}.
\end{gathered}
\end{equation}
Using these, Eqs.~\eqref{eq:w1xz-norm}--\eqref{eq:w1w2}, the statement of Lemma~\ref{lemm:mckg} for $(p=0001,q=0010), (p=0010,q=0001)$, the self-testing of $W_3W_4$ that we have already seen, and the identity condition for $M_{x_0x_1|2}$, we can self-test this measurement. This completes the list of all self-testing statements in Lemma~\ref{lemm:rigidity}.

\section{Dilation to projective measurements}
\label{app:projs}

In this appendix, we prove 
the claim in Section~\ref{sect:DINPA} that we can assume the measurements are projective without loss of generality. 
As mentioned previously, such reductions are already known for standard Bell-nonlocality scenarios --- for instance, the proof for tensor product measurements is just a simpler version of the construction here. (The proof for commuting measurements is more complicated~\cite{PT15}, and we do not aim to consider that model in this work.) However, our proof here accommodates some forms of sequential measurements, as required for our protocol.

First, to handle a dilation of isometries to unitaries that we will need if the systems are infinite-dimensional, we prove a useful lemma based on the Sz.-Nagy dilation for an isometry. 
\begin{lemma}\label{lemm:sznagy}
Consider a set of isometries $V_j:\calH_0 \to \calH_j$ for $j=1,2,\dots,n$, and another set of isometries $V'_k:\calH_n \to \calK_k$ for $k=1,2,\dots,m$. 
Let $\calL = \bigoplus_{j=0}^n \calH_j \oplus \bigoplus_{k=1}^m \calK_k $, and view all $H_j$ and $K_k$ as subspaces of $\calL$. Then there exist unitaries $\{U_j\}_{j=1}^n$ on $\calL$ such that for any $\ket{\nu} \in \calH_0$, we have $V_j \ket{\nu} = U_j \ket{\nu} \in \calH_j$. There also exist unitaries $\{U'_k\}_{k=1}^m$ on $\calL$ such that for any $\ket{\mu} \in \calH_n$, we have $V'_k \ket{\mu} = U'_k \ket{\mu} \in \calK_k$.
\end{lemma}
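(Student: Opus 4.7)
The plan is to extend each isometry $V_j$ (respectively $V'_k$) to a unitary on $\calL$ by mapping the orthogonal complement of its domain (inside $\calL$) isometrically onto the orthogonal complement of its image (inside $\calL$). The underlying fact being invoked is the elementary one that any isometry between two equi-dimensional subspaces of a Hilbert space extends to a unitary on the whole space; no iterative Sz.-Nagy-style construction is actually needed because the dilation space $\calL$ is already handed to us.

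Concretely, for each $j$ I would view $\calH_0$ and $\calH_j$ as orthogonal subspaces of $\calL$ via the given direct-sum decomposition, so that $V_j$ becomes an isometry from $\calH_0 \subset \calL$ into $\calH_j \subset \calL$. Writing $\calL = \calH_0 \oplus \calH_0^\perp = V_j(\calH_0) \oplus V_j(\calH_0)^\perp$, where the orthogonal complements are computed inside $\calL$, the key observation is that $\dim \calH_0^\perp = \dim V_j(\calH_0)^\perp$: indeed, $V_j$ being an isometry gives $\dim V_j(\calH_0) = \dim \calH_0$, and both complements sit inside the same ambient space $\calL$. Hence there exists a unitary $W_j : \calH_0^\perp \to V_j(\calH_0)^\perp$, and I would define $U_j := V_j \oplus W_j$ with respect to the decomposition $\calL = \calH_0 \oplus \calH_0^\perp$. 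The operator $U_j$ is unitary on $\calL$ by construction, and it satisfies $U_j \ket{\nu} = V_j \ket{\nu} \in \calH_j$ for every $\ket{\nu} \in \calH_0$, as required.

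For the second family, the construction is entirely analogous after substituting $\calH_n$ for $\calH_0$ and $\calK_k$ for $\calH_j$: view both as orthogonal subspaces of $\calL$, extend $V'_k$ by any unitary from $\calH_n^\perp$ (inside $\calL$) onto $V'_k(\calH_n)^\perp$ (inside $\calL$), and the resulting operator $U'_k$ is the desired extension of $V'_k$.

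I do not anticipate a substantial obstacle, as the whole argument reduces to routine dimension-counting. The only step needing mild care is the claim that the two orthogonal complements have matching dimension; in finite dimensions this is immediate, and in the separable infinite-dimensional case it still holds because one is deleting subspaces of equal dimension from the same ambient space. If one wishes to sidestep any cardinality subtleties entirely, one could simply pick a joint orthonormal basis of $\calL$ adapted to the subspaces $\calH_0$ and $V_j(\calH_0)$, and define $W_j$ by an explicit bijection between the remaining basis vectors.
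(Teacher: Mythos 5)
Your construction is correct and genuinely different from the paper's: you invoke the abstract fact that an isometry between orthogonal subspaces of $\calL$ extends to a unitary on $\calL$ once the two orthogonal complements are shown to have matching Hilbert dimension, whereas the paper writes down the Sz.-Nagy dilation explicitly as a $2\times 2$ block
\begin{equation*}
U_j\big|_{\calH_0\oplus\calH_j}=\begin{pmatrix}0 & -V_j^\dagger\\ V_j & \Id-V_jV_j^\dagger\end{pmatrix},
\qquad U_j\big|_{(\calH_0\oplus\calH_j)^\perp}=\Id,
\end{equation*}
and verifies unitarity by direct computation. The paper's approach buys you a closed-form formula that works uniformly in all dimensions without any cardinality bookkeeping; yours is shorter at the cost of a dimension-matching lemma. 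The one place where your justification is imprecise is the claim that ``one is deleting subspaces of equal dimension from the same ambient space'' automatically yields complements of equal dimension --- that statement is false in general: in $\ell^2(\mathbb{N})$, the span of the even-indexed basis vectors and the span of all basis vectors except $e_0$ both have dimension $\aleph_0$, yet their orthogonal complements have dimensions $\aleph_0$ and $1$ respectively. What rescues your argument here is a structural feature worth naming explicitly: $\calH_0$ and $V_j(\calH_0)$ live in \emph{disjoint direct summands} of $\calL$, hence are orthogonal, so one may choose a single orthonormal basis of $\calL$ adapted to $\calH_0$, to $V_j(\calH_0)$, and to the remaining blocks simultaneously. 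Writing $\calH_j=V_j(\calH_0)\oplus(\calH_j\ominus V_j(\calH_0))$, both complements then decompose as the common block $\bigoplus_{i\neq 0,j}\calH_i\oplus\bigoplus_k\calK_k\oplus(\calH_j\ominus V_j(\calH_0))$ direct-summed with one additional piece of dimension $\dim\calH_0$ (either $\calH_0$ itself or $V_j(\calH_0)$), giving equality of dimensions and an explicit bijection of basis vectors. With this repair the argument is fully rigorous; the same remark applies verbatim to the $V'_k$ family with $\calH_n$ and $\calK_k$ in place of $\calH_0$ and $\calH_j$.
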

\begin{proof}
For brevity, we only prove the case for $n=m=2$; the construction for more isometries is precisely analogous. For any isometry $V$ with some domain $\calH$, we define the operator $\Pi_{V} = \Id_{\calH} - V V^\dagger$, which can be easily shown to be a projector satisfying $V^\dagger\Pi_{V} = \Pi_{V}V = 0$. Let
\begin{equation}
\begin{gathered}
U_1 = \begin{pmatrix} 0_{\calH_0} & -V_1^\dagger & & \\ V_1 & \Pi_{V_1} & & \\ & & \Id_{\calH_2} & \\ & & & \Id_{\calK_1 \oplus \calK_2} \end{pmatrix}, \quad
U_2 = \begin{pmatrix} 0_{\calH_0} & & -V_2^\dagger & \\ & \Id_{\calH_1} & & \\ V_2 & & \Pi_{V_2} & \\ & & & \Id_{\calK_1 \oplus \calK_2} \end{pmatrix}, \\
U'_1 = \begin{pmatrix} \Id_{\calH_0\oplus\calH_1} & & & \\ & 0_{\calH_2} & -V'^\dagger_1 & \\ & V'_1 & \Pi_{V'_1} & \\ & & & \Id_{\calK_2} \end{pmatrix}, \quad
U'_2 = \begin{pmatrix} \Id_{\calH_0\oplus\calH_1} & & & \\ & 0_{\calH_2} & & -V'^\dagger_2 \\ & & \Id_{\calK_1} & \\ & V'_2 & & \Pi_{V'_2} \end{pmatrix},
\end{gathered}
\end{equation}
where blank blocks in the matrices are taken to be filled with zero operators.
A direct calculation shows that these operators are indeed unitary and satisfy the desired property, e.g. for $U_1$ we have
\begin{align}
U_1^\dagger U_1 &= 
\begin{pmatrix} 0_{\calH_0} & V_1^\dagger & & \\ -V_1 & \Pi_{V_1} & & \\ & & \Id_{\calH_2} & \\ & & & \Id_{\calK_1 \oplus \calK_2} \end{pmatrix} 
\begin{pmatrix} 0_{\calH_0} & -V_1^\dagger & & \\ V_1 & \Pi_{V_1} & & \\ & & \Id_{\calH_2} & \\ & & & \Id_{\calK_1 \oplus \calK_2} \end{pmatrix} \nonumber \\
&= \begin{pmatrix} 0_{\calH_0} + V_1^\dagger V_1 & 0_{\calH_0} + V_1^\dagger \Pi_{V_1} & & \\ 0_{\calH_0} + \Pi_{V_1} V_1 & V_1 V_1^\dagger + \Pi_{V_1} & & \\ & & \Id_{\calH_2} & \\ & & & \Id_{\calK_1 \oplus \calK_2} \end{pmatrix} \nonumber \\
&= \Id_{\calH_0 \oplus \calH_1 \oplus \calH_2 \oplus \calK_1 \oplus \calK_2}
\end{align}
and
\begin{equation}
U_1 \ket{\nu}= 
\begin{pmatrix} 0_{\calH_0} & -V_1^\dagger & & \\ V_1 & \Pi_{V_1} & & \\ & & \Id_{\calH_2} & \\ & & & \Id_{\calK_1 \oplus \calK_2} \end{pmatrix} 
\begin{pmatrix} \ket{\nu} \\ \mathbf{0}_{\calH_1} \\ \mathbf{0}_{\calH_2} \\ \mathbf{0}_{\calK_1 \oplus \calK_2} \end{pmatrix} =
\begin{pmatrix} \mathbf{0}_{\calH_0} \\ V_1 \ket{\nu} \\ \mathbf{0}_{\calH_2} \\ \mathbf{0}_{\calK_1 \oplus \calK_2} \end{pmatrix}.
\end{equation}
\end{proof}

\begin{remark}
The construction in the above proof easily generalizes in several ways, for instance if one allows some of the isometries to have the same codomain, or if another set of isometries could be applied after $V'_k$. 
\end{remark}

We now prove the claim that the measurements can be assumed projective without loss of generality, stated as the following lemma:
\begin{lemma} \label{lemm:proj}
For any state ${\rho}$ and POVMs attaining some value for Bob's cheating probability (as expressed in Eq.~\eqref{eq:bob-DI}), there is another state ${\bar{\rho}}$ and \emph{projective} measurements (PVMs) that attain the same value. An analogous statement holds for Alice's cheating probability (as expressed in Eq.~\eqref{eq:alice-DI}). 
\end{lemma}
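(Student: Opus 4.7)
My plan is to prove Lemma~\ref{lemm:proj} by a Naimark--Stinespring dilation: for each POVM appearing in the cheating-probability expressions, I would replace it with a projective measurement on a Hilbert space enlarged by fresh ancillas prepared in $\ket{0}$, and take $\bar{\rho}$ to be $\rho$ tensored with those ancilla states. For any POVM $\{E_{z|w}\}_z$ with classical input $w$, I would choose Kraus operators $K_{z|w}$ satisfying $K_{z|w}^\dagger K_{z|w} = E_{z|w}$, define a Stinespring isometry $V_w\ket{\phi} = \sum_z K_{z|w}\ket{\phi}\otimes\ket{z}_\calK$ on an ancilla $\calK$ initially in $\ket{0}$, and extend $V_w$ to a unitary $U_w$ (enlarging the Hilbert space if necessary via Lemma~\ref{lemm:sznagy}); the PVM $\bar{E}_{z|w} := U_w^\dagger(\Id\otimes\kb{z}_\calK)U_w$ then satisfies $\bra{\phi}\bra{0}\bar{E}_{z|w}\ket{\phi}\ket{0} = \bra{\phi}E_{z|w}\ket{\phi}$ by a direct computation using unitarity of $U_w$. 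This alone handles Alice's measurements in both cheating scenarios, Bob's test measurements, and Bob's (parallel) measurements in the cheating-Alice case.

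The delicate case is Bob's \emph{sequential} measurements in the cheating-Bob scenario: Hermitian operators $N^\prot_y$ are applied first (yielding post-measurement state $N^\prot_y\rho N^\prot_y$) and then the POVM $\{N^\prot_{g|a}\}_g$ is applied conditioned on Alice's message $a$. I would introduce two fresh ancillas $\calK_1^B, \calK_2^B$ each in $\ket{0}$ together with unitary extensions of the Stinespring dilations
\begin{equation*}
U_1(\ket{\phi}\ket{0}_{\calK_1^B}) = \sum_y N^\prot_y\ket{\phi}\otimes\ket{y}_{\calK_1^B}, \qquad U_{2,a}(\ket{\phi}\ket{0}_{\calK_2^B}) = \sum_g \sqrt{N^\prot_{g|a}}\ket{\phi}\otimes\ket{g}_{\calK_2^B},
\end{equation*}
and define
\begin{align*}
\bar{N}^\prot_y & := U_1^\dagger\bigl(\Id_\calB\otimes\kb{y}_{\calK_1^B}\otimes\Id_{\calK_2^B}\bigr)U_1, \\
\bar{N}^\prot_{g|a} & := U_1^\dagger U_{2,a}^\dagger\bigl(\Id_\calB\otimes\Id_{\calK_1^B}\otimes\kb{g}_{\calK_2^B}\bigr)U_{2,a}U_1.
\end{align*}
Both sets are PVMs (each element is the conjugate of a rank-one ancilla projector by a unitary, and the elements of each set sum to the identity). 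The key observation is that $U_{2,a}$ acts trivially on $\calK_1^B$ and hence commutes with $\Id_\calB\otimes\kb{y}_{\calK_1^B}\otimes\Id_{\calK_2^B}$; using this together with the commutation of the two ancilla projectors and $\kb{y}^2 = \kb{y}$, a short manipulation yields
\begin{equation*}
\bar{N}^\prot_y\bar{N}^\prot_{g|a}\bar{N}^\prot_y = U_1^\dagger U_{2,a}^\dagger\bigl(\Id_\calB\otimes\kb{y}_{\calK_1^B}\otimes\kb{g}_{\calK_2^B}\bigr)U_{2,a}U_1.
\end{equation*}

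To close the argument, I would compute $\bra{0}_{\calK_1^B\calK_2^B}\bar{N}^\prot_y\bar{N}^\prot_{g|a}\bar{N}^\prot_y\ket{0}_{\calK_1^B\calK_2^B}$ by expanding $U_{2,a}U_1\ket{\phi}\ket{00} = \sum_{y',g'}\sqrt{N^\prot_{g'|a}}N^\prot_{y'}\ket{\phi}\ket{y'}\ket{g'}$ and projecting onto $\ket{y}\ket{g}$, obtaining $N^\prot_y N^\prot_{g|a}N^\prot_y$ as an operator on $\calB$ (using $N^\prot_y$ Hermitian and $\sqrt{N^\prot_{g|a}}^\dagger\sqrt{N^\prot_{g|a}} = N^\prot_{g|a}$). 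Combined with the analogous Alice-side identity $\bra{0}_{\calK_A}\bar{M}_{x|a}\ket{0}_{\calK_A} = M_{x|a}$ and the tensor-product structure of $\bar{\rho}$, every term of the form $\tr(M_{x|a}N^\prot_y N^\prot_{g|a}N^\prot_y\rho)$ in Eq.~\eqref{eq:bob-prot} is preserved; the terms in Eq.~\eqref{eq:bob-test} and the cheating-Alice expression of Eq.~\eqref{eq:alice-DI} are preserved by the parallel-dilation construction described above.

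The main obstacle will be the sequential construction: requiring $\bar{N}^\prot_{g|a}$ to simultaneously be a genuine projector on the enlarged space \emph{and} produce the correct value when sandwiched between two copies of $\bar{N}^\prot_y$ is what forces the Heisenberg-picture definition involving both $U_1$ and $U_{2,a}$, and the whole computation hinges on the fact that $U_{2,a}$ commutes with $\kb{y}_{\calK_1^B}$. A secondary technicality is invoking Lemma~\ref{lemm:sznagy} to extend all the Stinespring isometries (potentially on different codomains) to unitaries on a single common Hilbert space so that the above expressions are all well-defined simultaneously.
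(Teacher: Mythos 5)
Your approach is essentially the same as the paper's: a Naimark/Stinespring dilation in which the PVM for the second (sequential) measurement is defined in the Heisenberg picture with the first unitary $U_1$ built in, and Lemma~\ref{lemm:sznagy} is invoked to dilate the isometries to unitaries on a common Hilbert space. The key structural insight — that the sequential projectors must encode the full history of unitaries so that the product $\bar{N}^\prot_y\bar{N}^\prot_{g|a}\bar{N}^\prot_y$ collapses correctly — matches the paper's central identity (Eq.~\eqref{eq:seqprojs}).

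Two caveats worth flagging. First, you assume $N^\prot_y$ is Hermitian (taking the post-measurement state to be $N^\prot_y\rho N^\prot_y$), but the paper's formula of that form is written only \emph{after} the lemma is applied; the lemma itself must be proven for general Kraus operators, with post-state $N^\prot_y\rho N^{\prot\,\dagger}_y$. Fortunately your construction needs no such assumption: tracking daggers gives $\bra{0}\bar{N}^\prot_y\bar{N}^\prot_{g|a}\bar{N}^\prot_y\ket{0} = N^{\prot\,\dagger}_y E_{g|a}N^\prot_y$ (with $E_{g|a}$ the POVM element), which is exactly the original probability, so the Hermiticity hypothesis is harmless but should be dropped. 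Second, the commutativity of $U_{2,a}$ with $\kb{y}_{\calK_1^B}$ is not automatic once Lemma~\ref{lemm:sznagy} is used to extend the isometry: an arbitrary unitary dilation on the direct-sum space need not preserve the tensor factorization with $\calK_1^B$. You should either dilate only on $\calB\calK_2^B$ and tensor the result with $\Id_{\calK_1^B}$, or avoid the commutativity shortcut entirely and compute $\bar{N}^\prot_{g|a}\bar{N}^\prot_y = U_1^\dagger U_{2,a}^\dagger(\kb{g})U_{2,a}(\kb{y})U_1$ directly and evaluate the norm on $\ket{\rho}\otimes\ket{0}$, which is what the paper does and which sidesteps the issue.
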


\begin{proof}
We describe the proof for cheating Bob only (the proof for cheating Alice is analogous, though simpler since there are no sequential measurements in that case).

First, let us denote Alice and Bob's measurement operators as $M_{x|a}$, $N^\test_{y^A|b^A}$, $N^\prot_y$ and $N^\prot_{g|m}$, as described previously, but here we shall not assume they are projective. 
Note that this is intended to denote the Kraus operators of the measurements, rather than the POVM elements. More specifically, taking $M_{x|a}$ (for a fixed $a$) as an example, this denotes a set of operators satisfying $\sum_x M_{x|a}^\dagger M_{x|a} = \Id$, such that the (subnormalized) state after performing the measurement on a state $\rho$ and getting outcome $x$ is $M_{x|a} \rho M_{x|a}^\dagger$. (This is slightly more detailed than specifying POVM elements $\povm_{x|a} \succeq 0$ such that $\sum_x \povm_{x|a} = \Id$ and $P(x) = \tr(\povm_{x|a} \rho)$, because it incorporates an explicit description of the post-measurement state. This allows, for instance, defining $M_{x|a}$ such that the measurement outcome and choice of measurement is encoded into the post-measurement state, which is important for sequential measurements.) For projective measurements, the two notions coincide and hence we did not need to specify this distinction previously.

Without loss of generality, we can assume the codomains of all the measurement operators are the same (if they were different, we could simply view all of them as being embedded in a common Hilbert space; for instance the direct sum of all the Hilbert spaces\footnote{Technically, the operators $N^\prot_{g|m}$ must be modified to ensure the POVM elements still sum to the identity on this space, but this is fairly straightforward (similar to the construction in Eq.~\eqref{eq:dilatedPVMs}).}).
Let $\calY'$ and $\calY''$ be 4-dimensional Hilbert spaces with orthonormal bases $\{\ket{y'}\}$ and $\{\ket{y''}\}$. 
We first define the following isometries:
\begin{equation}
\begin{gathered}
V^\test_{b^A} \ket{\nu}_{\calY} = \sum_{y^A} \ket{y^A}_{\calY'} N^\test_{y^A|b^A} \ket{\nu}_{\calY} ,\\
V^\prot \ket{\nu}_{\calY} = \sum_y \ket{y}_{\calY'} N^\prot_y \ket{\nu}_{\calY},\\
V^\prot_{m} \ket{y'}_{\calY'}\ket{\nu}_{\calY} = \sum_g \ket{g \, y'}_{\calY'' \calY'} N^\prot_{g|m} \ket{\nu}_{\calY}.
\end{gathered}
\end{equation}
We apply Lemma~\ref{lemm:sznagy} with $V^\test_{b^A}$ being the isometries $\{V_j\}_{j=1}^{n-1}$, $V^\prot$ being the isometry $V_n$, and $V^\prot_{m}$ being the isometries $\{V'_k\}_{k=1}^{m}$. This gives us corresponding unitaries on a Hilbert space $\calL$, which we now take as the space in which all subsequent expressions are embedded. (In the finite-dimensional case, one can perform a simpler construction in which $\calY''\calY'\calY$ itself is the common Hilbert space, see e.g.~\cite{NC10}.)
These unitaries match the actions of the isometries on their respective subspaces, i.e.~$U^\test_{b^A} \ket{\nu}_{\calY} = V^\test_{b^A} \ket{\nu}_{\calY}$, $U^\prot \ket{\nu}_{\calY} = V^\prot \ket{\nu}_{\calY}$ and $U^\prot_{m} \ket{\mu}_{\calY'\calY} = V^\prot_{m} \ket{\mu}_{\calY'\calY}$. 

For each isometry $V$, let $\calL_{\setminus V}$ denote the subspace of $\calL$ given by omitting the codomain of $V$ from the direct sum used to construct $\calL$ (we introduce this technical detail just to ensure the PVM sets we construct below indeed sum to the identity on $\calL$). We then define projectors
\begin{equation}\label{eq:dilatedPVMs}
\begin{gathered}
\proj^\test_{y^A|b^A} = U^{\test \, \dagger}_{b^A} ((\kb{y^A}_{\calY'} \otimes \Id_{\calY}) \oplus \delta_{y^A,00} \Id_{\calL_{\setminus V^\test_{b^A}}} ) U^\test_{b^A} ,\\
\proj^\prot_y = U^{\prot \, \dagger} ((\kb{y}_{\calY'} \otimes \Id_{\calY}) \oplus \delta_{y,00} \Id_{\calL_{\setminus V^\prot}} ) U^\prot , \\
\proj^\prot_{g|m} = U^{\prot \, \dagger} U^{\prot \, \dagger}_{m} ((\kb{g}_{\calY''} \otimes \Id_{\calY' \calY}) \oplus \delta_{g,00} \Id_{\calL_{\setminus V^\prot_{m}}} ) U^\prot_{m} U^\prot ,
\end{gathered}
\end{equation}
where $\delta_{j,k}$ denotes the Kronecker delta. It is easily seen that these form valid PVM sets. Importantly, these projectors have the useful property (omitting the $\Id_{\calL_{\setminus V}}$ terms for brevity)
\begin{equation}
\proj^\prot_{g|m} \proj^\prot_y = U^{\prot \, \dagger} U^{\prot \, \dagger}_{m} (\kb{g}_{\calY''} \otimes \Id_{\calY' \calY}) U^\prot_{m} (\kb{y^A}_{\calY'} \otimes \Id_{\calY}) U^\prot.
\label{eq:seqprojs}
\end{equation}

We construct unitaries $U_a$ and projectors $\proj_{x|a}$ analogously for Alice, though for that we only need to introduce a single ancillary system $\calX'$. As for the state, we can simply take $\ket{\bar{\rho}} = \ket{\rho}$, though with the understanding that it is now embedded in a larger Hilbert space.
Using Eq.~\eqref{eq:seqprojs}, we now compute some probabilities resulting from these projective measurements: 
\begin{align}
\norm{\proj_{x^A|a^A} \proj^\test_{y^A|b^A} \ket{\rho}}_2 &= \norm{U^\dagger_{a^A} U^{\test \, \dagger}_{b^A} \left(\ket{x \, y^A} M_{x^A|a^A} N^\test_{y^A|b^A} \ket{\rho}\right)}_2 \nonumber \\
&= \norm{M_{x^A|a^A} N^\test_{y^A|b^A} \ket{\rho}}_2, \\
\norm{\proj_{x|a} \proj^\prot_{g|m}\proj^\prot_y \ket{\rho}}_2 
&= \norm{U^\dagger_a U^{\prot \, \dagger} U^{\prot\,\dagger}_{m} \left(\ket{x \, g \, y} M_{x|a} N^\prot_{g|m} N^\prot_y \ket{\rho}\right)}_2 \nonumber \\
&= \norm{M_{x|a} N^\prot_{g|m} N^\prot_y \ket{\rho}}_2, 
\end{align}
since the 2-norm is unitarily invariant. These probabilities include all terms appearing in the cheating probability\footnote{We have exploited the fact that in our scenario, the measurements $M^\prot_{g|m}$ are always only applied after the measurement $M^\prot_y$. If there had been another possible measurement (including a trivial measurement, i.e.~identity operation) to perform before measuring $M^\prot_{g|m}$, it might not always be possible to ensure that the equivalent of Eq.~\eqref{eq:seqprojs} holds for that measurement as well. (In particular, when applying the NPA hierarchy, the projective measurements we constructed may not give the same value for all the $\Gamma$ matrix elements as the original POVMs. However, this is not an issue because the reduction to projective measurements occurs \emph{before} applying the NPA hierarchy.)}, and are equal to those attained by the original POVMs on $\ket{\rho}$. 
\end{proof}

\end{document}